\def\be{\begin{equation}}
\def\ee{\end{equation}}
\numberwithin{equation}{section}
\newtheorem{theorem}{Theorem}[section]
\newtheorem{proposition}[theorem]{Proposition}
\newtheorem{lemma}[theorem]{Lemma}
\newtheorem{corollary}[theorem]{Corollary}
\newtheorem{assumption}{Assumption}
\newtheorem{definition}{Definition}
\newtheorem{remark}{Remark}
\begin{document}
\title{Virial Expansion Bounds Through Tree Partition Schemes}

\author[S. Ramawadh]{Sanjay Ramawadh}
\address{Department of Mathematics, Utrecht University, P.O. Box 80010, 3508 TA Utrecht}
\email{s.d.ramawadh@uu.nl}

\author[S. J. Tate]{ Stephen James Tate}
\address{Department of Chemical Engineering, South Kensington Campus, Imperial College London, SW7 2AZ, U.K.}
\address{ Ruhr-Universit\"{a}t Bochum, Fakult\"{a}t f\"{u}r Mathematik, 44870 Bochum, Germany}
\address{Department of Mathematics, University of Sussex, Falmer Campus, Brighton BN1 9QH, U.K.}
\email{s.tate@imperial.ac.uk}

\subjclass{82B21, 82B30, 82D05, 05C22}

\keywords{virial expansion, cluster expansion, two-connected graph, tree expansions, generating functions}

\begin{abstract}
In this paper, we use tree partition schemes and an algebraic expression for the virial coefficients in terms of the cluster coefficients in order to derive upper bounds on the virial coefficients and consequently lower bounds on the radius of convergence of the virial expansion $\mathcal{R}_{\textrm{Vir}}$. The bound on the radius of convergence in the case of the Penrose partition scheme is the same as that proposed by Groeneveld and improves the bound achieved by Lebowitz and Penrose. 
\end{abstract}

\thanks{\copyright{} 2015 by the authors. This paper may be reproduced, in its entirety, for non-commercial purposes.}

\maketitle

\section{Introduction}

In this paper, we derive virial expansion bounds for stable potentials, based on the tree partition scheme of Penrose \cite{P67}. These bounds offer an improvement on the radius of convergence of the virial expansion $\mathcal{R}_{\textrm{Vir}}$ when compared to those of Lebowitz and Penrose \cite{LP64} and are actually the same as those proposed by Groeneveld \cite[Chapter IV, Section 3.7]{G67}. Towards achieving these bounds, we introduce a new approach to understanding the virial coefficients as a weighted sum over tree graphs. This offers an alternative approach to virial coefficients than those made in the literature.

The virial expansion provides a description of the relationship between macroscopic properties of a system of particles, specifically pressure and density. The virial expansion may be viewed as a generalisation of the ideal gas law $V P = NkT$. $V$ denotes the volume in which the system is contained, $P$ is the pressure of the system, $N$ is the number of particles in the system, $k$ is Boltzmann's constant and $T$ is the temperature of the system. 

The idea of refining the ideal gas law based on an expansion of pressure as a power series in density was proposed by Thiessen \cite{T85} and was first undertaken by Kamerlingh Onnes in 1901 \cite{K01}. An important reference for cluster and virial expansions is the book of Mayer \cite{MM40}, who successfully derived the interpretation of the virial coefficients in terms of weighted two-connected graphs. Much work was done on the virial expansion in the 1950s and 60s, including: the development of Kirkwood-Salsburg Equations \cite{KS53}; the graphical approximation approach of Groeneveld \cite{G62}; deriving inequalities for the thermodynamic functions \cite{L63,P63}; and bounds obtained from using Lagrange inversion on the cluster expansion \cite{LP64}.

In the literature, virial expansions have undergone a recent revival. There are many different approaches to the virial expansion. These include: a modification of the Lebowitz-Penrose argument \cite{T13}; derivations through the Canonical Ensemble \cite{PT12, MP13}; and using differential equations for functions related to the pressure with a uniform repulsive potential \cite{BM14}.  There is also the recent paper of Morais, Procacci and Scoppola \cite{MPS14} for the Lennard-Jones gas. Furthermore, the recent work of Jansen \cite{J12}, provides physical interpretations for the different radii of convergence, emphasising cases in which the virial expansion converges for a larger range of densities than the cluster expansion. 

A tree partition scheme for connected graphs was first provided by Penrose \cite{P67}. Sokal \cite{S01} and Fern\'{a}ndez and Procacci \cite{FP07} give a more general version and apply it to the Potts model and hardcore gases respectively. This paper develops this partition scheme for two-connected graphs.

Providing a weighted tree expression for the virial coefficients should also have consequences for numerical results that have been obtained. The number of tree graphs grows at a smaller rate than those of two-connected graphs. Indeed, by Cayley's formula, we have $n^{n-2}$ trees on $n$ vertices compared to at least $2^{{n \choose 2} -n}$ two-connected graphs on $n$ vertices. The lower bound arises from considering the graph corresponding to a cycle on $n$ vertices, which is certainly two-connected. We may add or not any of the remaining ${n \choose 2}-n$ edges in the graph to give some subset of two-connected graphs. The work of Ree-Hoover \cite{RH64} established a method of reducing the number of two-connected graphs that one needs to consider by introducing a factor $1=h_{i,j}-f_{i,j}$ for every missing edge in each two-connected graph and finding the cancellations that arise from this. This has led to Monte-Carlo simulations to give many virial coefficients in the work of Clisby and McCoy \cite{CM05}. A practical application of these expressions would be to evaluate the tree versions provided by the expression of virial coefficients in terms of weighted trees.

In future work, we hope to apply this method to other partition schemes in particular to the scheme indicated by the notion of internally and externally active edges of a matroid \cite{ S01, B08}, which is given explicitly in \cite{T14}. Furthermore, the hardcore cases in \cite{FP07} will give good further examples to apply the tree partition scheme to.

The paper is structured so that in section \ref{sec:clustertovirial}, we give a derivation for the relationship between cluster and virial coefficients emphasising their combinatorial properties. The results of this paper are summarised in section \ref{sec:results} and we prove that this result  is the same as that proposed by Groeneveld in section \ref{sec:groeneveldpenrose} . In section \ref{sec:partitionschemes}, we provide the general framework in which we understand the tree partition schemes and the important properties. We apply this to the Penrose partition scheme in section \ref{sec:penrosepartitionscheme} for positive and stable potentials and present the conclusions and further work in section \ref{sec:conclusion}.

\section{The Classical Gas Model} \label{sec:clustertovirial} 

The results for this paper are for the classical many particle system.

The canonical partition function for a fixed number of particles is given by:
\be Z_N:=  \frac{1}{N!}\prod\limits_{i=1}^N\left(\int_{\mathbb{R}^d}\, \mathrm{d}^dx_i\right) e^{-\beta U_N(x_1, \cdots, x_N)}, \ee 
where $U_N$ is the interaction potential. We make important assumptions on the interaction potential in order to achieve the results, which are given in subsection \ref{subsec:potentialassumptions}. In subsection \ref{subsec:weightedgenfns} we indicate the connection between cluster coefficients and weighted connected graphs and in subsection \ref{subsec:virialcluster}, we derive the algebraic relationship.

\subsection{Assumptions on the potential}
\label{subsec:potentialassumptions}
\begin{assumption}[Potential]
The $N$-particle interaction potential: $U_N(x_1, \cdots, x_N)$ may be written as the sum of pair-potentials:
\be U_N(x_1, \cdots, x_N)= \sum\limits_{1 \leq i<j \leq N}\Phi(x_i,x_j). \ee
Furthermore, we assume that the pair potentials $\Phi(x_i,x_j)$ are central, that is, they only depend on the distance from $x_i$ to $x_j$. 
\end{assumption}

\begin{definition}[Stability]
A pair potential is called \emph{stable} if for all $N \in \mathbb{N}$ and all collections of locations $(x_1, \cdots, x_N) \in \mathbb{R}^{dN}$, we have the following inequality:

\be \sum\limits_{1 \leq i<j \leq N}\Phi(x_i,x_j) \geq -BN. \ee 

\end{definition}
\begin{definition}[$C(\beta)$ - `temperedness']
The `temperedness' function $C(\beta)$ plays an important r\^{o}le in the estimates of convergence through tree partition schemes.
\be C(\beta):=\int\limits_{\mathbb{R}^d} \, \left|e^{-\beta \Phi(0,x)}-1\right| \mathrm{d}^dx \label{Cbeta}. \ee
If the expression \eqref{Cbeta} is finite then the potential $\Phi$ is called `tempered', which we assume for bounds involving $C(\beta)$.
\end{definition}

\subsection{The Classical Gas and Weighted Generating Functions}
\label{subsec:weightedgenfns}

We use the finite-volume grand canonical partition function for a classical gas:
\be \Xi_\Lambda(z)  = \sum\limits_{N=0}^{\infty}Z_{N,\Lambda}z^N \ee
where, given the assumptions on the potential, $Z_{N,\Lambda}$ may be expressed as:
\be Z_{N,\Lambda} = \frac{1}{N!}\prod\limits_{i=1}^N\left(\int_{\Lambda^N}\, \mathrm{d}^dx_i\right) e^{-\beta\sum\limits_{1 \leq i<j \leq N}\Phi(x_i,x_j)}. \ee
We have the relationship $\beta P_\Lambda = \frac{1}{|\Lambda|} \ln \Xi_\Lambda$, which gives us an expansion for the finite-volume pressure. 

The cluster and virial expansions for the finite-volume pressure are expressed as:

\be
\beta P_\Lambda = \sum_{n=1}^\infty \frac{b_n(\Lambda)}{n!}z^n, \label{eq:cluster-finitevol}
\ee
\be
\beta P_\Lambda = \sum_{n=1}^\infty \frac{\beta_n(\Lambda)}{n!} \rho^n. \label{eq:virial-finitevol}
\ee

The variable $z=e^{\beta \mu}$ is the activity or fugacity of the system and gives control over the average particle number in the grand canonical ensemble. The variable $\mu$ is the chemical potential and $\beta=\frac{1}{kT}$ is the inverse temperature. $\rho$ denotes the density of the system of particles. 
The thermodynamic pressure $p$ is then found by taking the limit $\Lambda\uparrow\mathbb{R}^d$ in the sense of Fisher. The cluster and virial expansion for the thermodynamic pressure are expressed as:
\be
p = \sum_{n=1}^\infty \frac{b_n}{n!}z^n, \label{eq:cluster}
\ee
\be
p = \sum_{n=1}^\infty \frac{\beta_n}{n!} \rho^n. \label{eq:virial}
\ee
The fugacity and the density are related to each other as follows:
\be
\rho = z \frac{dp}{dz} = \sum_{n=1}^\infty \frac{b_n}{(n-1)!}z^n. \label{eq:density} 
\ee

Derivations and expressions of the coefficients $b_n(\Lambda)$, $\beta_n(\Lambda)$, $b_n$, and $\beta_n$ in terms of connected and two-connected graphs are found in the book of Mayer \cite[Chapter 13]{MM40}.

Mayer's trick is to introduce $f_{i,j}:=\exp(-\beta \Phi(x_i,x_j))-1$ and to rewrite the integrand as:
\begin{align*}
\exp(-\beta \sum_{1 \leq i<j \leq N} \Phi(x_i,x_j)) &= \prod\limits_{1 \leq i<j \leq N} \exp(-\beta \Phi(x_i,x_j)) \\
&= \prod\limits_{1 \leq i<j \leq N} (1+f_{i,j}) \\
&= \sum\limits_{g \in \mathcal{G}[N]}\prod\limits_{e \in E(g)} f_e, \end{align*}
where $\mathcal{G}[N]$ denotes the collection of all graphs on the vertex set $[N]$.

We may define the graph weight:
\be w(g):= \prod\limits_{e \in E(g)} f_e, \ee
then the canonical partition function may be written as:
\be Z_{N,\Lambda}= \frac{1}{N!}\prod\limits_{i=1}^N\left(\int_{\Lambda}\, \mathrm{d}^dx_i\right) \sum\limits_{g \in \mathcal{G}[N]} w(g) \ee
and we realise that the grand canonical partition function may be written as the weighted generating function of graphs. One may also use the weights including the integrals as:
\be 
W_\Lambda(g) = \prod\limits_{i \in V(g)} \left(\int_{\Lambda} \, \mathrm{d}^dx_i \right) w(g) .
\ee

Mayer derived that the pressure can be written as the weighted exponential generating function of connected graphs, that is, for the weight:
\be
\tilde{w}(g):= \lim\limits_{\Lambda \uparrow \mathbb{R}^d}\frac{1}{|\Lambda|}W_\Lambda(g). \ee 

We have that:
\be
b_n = \sum\limits_{g \in \mathcal{C}[n]} \tilde{w}(g), \ee
where $\mathcal{C}[n]$ denotes the collection of connected graphs on $n$ labels.

\subsection{Deriving the Algebraic Relationship}
\label{subsec:virialcluster}

It is possible to relate the virial coefficients $\beta_n$ to the well understood cluster coefficients $b_n$. This is done by inverting \eqref{eq:density}, so as to obtain an expression for the fugacity as a power series in terms of the density, and then substituting this power series into \eqref{eq:cluster}. Lagrange inversion \cite{C74} gives the following:

\be
\beta_{n+1} = n! [z^n] \left[\left(\frac{\rho(z)}{z}\right)^{-n}\right], \label{eq:Lagrange}
\ee

Using this expression, we can find a nice combinatorial expression of the $\beta_n$ in terms of the $b_n$.

Define, for any $x \in \mathbb{R}$ and any nonnegative integer $k$, the falling factorial as: 

\be
(x)^{\underline{k}} := x(x-1)\cdots (x-k+1) = \prod_{i=0}^{k-1} (x-i).
\ee

If $x$ is a positive integer, we may also write $(x)^{\underline{k}} = \frac{x!}{(x-k)!}$ and $\frac{(x)^{\underline{k}}}{k!}=\binom{x}{k}$. We use this last equality to extend the definition of binomial coefficients. For $x\in \mathbb{R}$ and any nonnegative integer $k$, we define 
\be
\binom{x}{k} := \frac{(x)^{\underline{k}}}{k!}.
\ee

Let $x_1,x_2,\ldots$ be an infinite number of variables, then the partial Bell polynomials $B_{n,k}\left(x_1,x_2,\ldots,x_n\right)$ are defined by:

\be
\exp\left( u \sum_{m=1}^\infty x_m \frac{t^m}{m!} \right) = 1 + \sum_{n=1}^\infty \frac{t^n}{n!} \left(\sum_{k=1}^n u^k B_{n,k}\left(x_1,\ldots, x_n\right)\right).
\ee

There exists an explicit expression for the partial Bell polynomials:

\be
B_{n,k}\left(x_1,x_2,\ldots,x_n\right) = \sum_{{(k_1,\ldots,k_n): k_i\geq 0}\atop{\sum_i k_i = k, \sum_i ik_i = n}} \frac{n!}{\prod_{i=1}^n k_i!} \prod_{i=1}^{n} \left(\frac{x_i}{i!}\right)^{k_i}.
\ee

For more known results on Bell polynomials we refer the reader to \cite{C74}. Lastly, we define the potential polynomials. Let $x_0=1$, let $x_1,x_2,\ldots$ be an infinite sequence of numbers and let $r$ be any number. The potential polynomials $P_n^{(r)}\left(x_1,\ldots,x_n\right)$ are defined by:

\be
\left(\sum_{n=0}^\infty x_n\frac{t^n}{n!}\right)^r = 1 + \sum_{n=1}^\infty P_n^{(r)}\left(x_1,\ldots,x_n\right)\frac{t^n}{n!}. \label{eq:potential1}
\ee

There is also the following explicit expression for the potential polynomials:

\be
P_n^{(r)}\left(x_1,\ldots,x_n\right) = \sum_{k=1}^n (r)^{\underline{k}} B_{n,k}\left(x_1,\ldots,x_n\right). \label{eq:potential2}
\ee

We return to \eqref{eq:Lagrange}. Since $\frac{\rho(z)}{z}=\sum_{n=1}^\infty \frac{b_n}{(n-1)!}z^{n-1}=\sum_{n=0}^\infty b_{n+1}\frac{z^n}{n!}$ and $b_1=1$, it follows from \eqref{eq:potential1} and \eqref{eq:potential2} that:

\begin{align}
\beta_{n+1} &= n! \frac{P_n^{(-n)}\left(b_2,\ldots,b_{n+1}\right)}{n!}\nonumber \\ 
 &= P_n^{(-n)}\left(b_2,\ldots,b_{n+1}\right)\nonumber\\
 &= \sum_{k=1}^n (-n)^{\underline{k}} B_{n,k}\left(b_2,\ldots,b_{n+1}\right)\nonumber \\
 &= \sum_{k=1}^n \binom{-n}{k} k! B_{n,k}\left(b_2,\ldots,b_{n+1}\right). \label{eq:virial-Bell}
\end{align}

\section{The Main Result}
\label{sec:results}
The main result of this paper is the improved radius of convergence of the virial expansion. En route to this improved radius of convergence, we obtain a new expression for the virial coefficients in terms of weighted tree graphs. This new expression is given in section \ref{sec:penrosepartitionscheme}.

For a stable interaction, we let $u:=\exp(2 \beta B)$, where $B$ is the stability constant. We have the following bound on the radius of convergence of the virial expansion, $\mathcal{R}_{\mathrm{Vir}}$, achieved through using tree partition schemes for connected graphs, arising from the Penrose partition scheme. The proof for the following theorem is given in section \ref{sec:penrosepartitionscheme}. The function $T_1(z)$ is the following generating function:
\be
T_1(z) =\sum\limits_{n=1}^{\infty}n^n\frac{z^{n+1}}{(n+1)!}+z. \label{eq:T1def}
\ee

{
\renewcommand{\thetheorem}{\ref{thm:stablebounds}}
\begin{theorem}
For a classical gas with a stable potential, the virial coefficients have the following upper bound:
\be \frac{|\beta_{n+1}|}{(n+1)!} \leq \frac{C(\beta)^{n-1}}{n} \left(\frac{1+\frac{1}{u}T_1(uc(u))}{c(u)}\right)^n, \ee
where $c(u)$ is the smallest positive root of 
\be c(u)T_1'(uc(u))-\frac{1}{u}T_1(uc(u))=1. \ee
\end{theorem}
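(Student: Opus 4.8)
The plan is to start from the algebraic identity \eqref{eq:virial-Bell}, which expresses $\beta_{n+1}$ as a finite alternating sum of falling factorials times partial Bell polynomials evaluated at the cluster coefficients $b_2,\ldots,b_{n+1}$. Since the Bell polynomials $B_{n,k}$ have nonnegative coefficients, the natural first move is to pass to absolute values: bound $|b_m|$ from above by its tree-graph counterpart. Concretely, I would invoke the Penrose partition scheme for connected graphs (to be developed in section \ref{sec:partitionschemes} and applied to the classical gas in section \ref{sec:penrosepartitionscheme}), which yields, for a stable potential, an estimate of the form $|b_{m}|/(m-1)! \le u^{m-1} C(\beta)^{m-1} m^{m-2}/(m-1)!$ — i.e. $|b_m|$ is controlled by the number of trees on $m$ vertices (Cayley's $m^{m-2}$), with one factor of $C(\beta)$ per edge and one factor $u=e^{2\beta B}$ per vertex absorbing the stability loss. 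The key structural point is that replacing each $b_m$ by a nonnegative majorant $\bar b_m$ and using monotonicity of $B_{n,k}$ in each argument does \emph{not} immediately work, because of the alternating sign $(-n)^{\underline k}$; so instead I would rewrite $(-n)^{\underline k} = (-1)^k (n)^{\overline k} = (-1)^k n(n+1)\cdots(n+k-1)$ and recognize the whole sum \eqref{eq:virial-Bell} as a \emph{potential polynomial} $P_n^{(-n)}$, then estimate $|P_n^{(-n)}(b_2,\ldots,b_{n+1})| \le P_n^{(n)}(|b_2|,\ldots,|b_{n+1}|) \le P_n^{(n)}(\bar b_2,\ldots,\bar b_{n+1})$, where the middle inequality holds because $|(-n)^{\underline k}| = (n)^{\overline k} \ge (n)^{\underline k}$ coefficientwise... actually the clean statement is $|(-n)^{\underline k}| \le (2n)^{\underline k}$ or similar; I would track the precise constant carefully here.

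The second main step is to identify the resulting generating function. With $\bar b_{m+1}/m! = m^{m-1} (uC(\beta))^{m-1}/ \text{(something)}$, one sees that $\sum_{m\ge 0} \bar b_{m+1} t^m/m!$ is, up to the rescaling $t \mapsto uC(\beta)\, t$, exactly the tree generating function built into $T_1$: indeed \eqref{eq:T1def} packages $\sum_{n\ge1} n^n z^{n+1}/(n+1)!$, whose derivative is $\sum n^n z^n/n!$, the exponential generating function for rooted labelled trees. So after the substitution, the potential-polynomial identity \eqref{eq:potential1}–\eqref{eq:potential2} tells us that $\sum_n P_n^{(n)}(\bar b)\, t^n/n!$ is the coefficient sequence of $\big(\sum \bar b_m t^m/m!\big)^n = (\text{tree series})^n$. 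Extracting the $[z^n]$ coefficient of an $n$-th power is precisely a Lagrange-inversion / Cauchy-estimate situation: I would apply the Cauchy inequality on a circle of radius $r$, $|\beta_{n+1}|/(n+1)! \le (\text{const}) \cdot r^{-n} \cdot (\text{value of the power series at } r)^{n}/(\text{something})$, and then \emph{optimize over $r$}. The optimal $r$ is exactly the one at which $\frac{d}{dr}\big[\tfrac1r \cdot (\text{series at } r)\big] = 0$, and working this stationarity condition out in terms of $T_1$ gives $c(u) T_1'(uc(u)) - \frac1u T_1(uc(u)) = 1$ — which is the defining equation for $c(u)$ in the statement. Plugging $r = C(\beta) c(u)$ back in produces the claimed bound $\dfrac{C(\beta)^{n-1}}{n}\Big(\dfrac{1+\frac1u T_1(uc(u))}{c(u)}\Big)^n$; the leftover $1/n$ and $C(\beta)^{n-1}$ (rather than $C(\beta)^n$) come from the $b_1=1$ term and the $z$-vs-$\rho$ bookkeeping in \eqref{eq:Lagrange}.

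The step I expect to be the real obstacle is the \emph{sign handling} in going from \eqref{eq:virial-Bell} to a genuine upper bound: the virial coefficients are an alternating sum of positive-coefficient polynomials, so a naive triangle inequality throws away cancellation and could in principle give a bound that is too weak to match Groeneveld's radius. The resolution must be that the Penrose tree scheme is itself a \emph{signed} resummation — it already accounts for the relevant cancellations at the level of connected graphs — so that the bound $|b_m| \le \bar b_m$ is tight enough that the subsequent crude $|P_n^{(-n)}| \le P_n^{(n)}(\bar b)$ step loses nothing at the exponential scale. Verifying that the exponents line up (that we get $c(u)^{-n}$ and not, say, $(2c(u))^{-n}$) is where the care is needed; concretely it rests on the elementary inequality $|(-n)^{\underline k}| = \prod_{i=0}^{k-1}(n+i) \le \prod_{i=0}^{k-1}(n+i)$ being used \emph{with} the constraint $\sum_i i k_i = n$ from the Bell-polynomial sum, which caps $k \le n$ and keeps the falling-factorial growth from exceeding what the radius $c(u)$ can absorb. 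Once that is nailed down, the rest is the Cauchy-estimate optimization described above, which is routine.
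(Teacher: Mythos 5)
Your plan diverges from the paper at the decisive step, and the divergence is fatal to obtaining the stated bound. You propose to majorise each cluster coefficient separately, $|b_m|\le \bar b_m$, and then control the alternating sum \eqref{eq:virial-Bell} by a triangle inequality on the potential polynomial, replacing $|(-n)^{\underline{k}}|$ by the rising factorial $(n)^{\overline{k}}$. Since $\sum_k (n)^{\overline{k}} B_{n,k}(\bar b_2,\ldots)$ is the coefficient extraction from $\bigl(1-\bar g(t)\bigr)^{-n}$ with $\bar g(t)=\sum_{m\ge1}\bar b_{m+1}t^m/m!$, this route produces a bound of the form $[t^n]\bigl(1-\bar g(t)\bigr)^{-n}$ --- which is essentially the Lebowitz--Penrose argument, the very bound the theorem is meant to improve. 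The negative-power singularity at $\bar g(r)=1$ changes the admissible Cauchy radius and the stationarity condition, so your optimisation cannot reproduce the equation $c(u)T_1'(uc(u))-\tfrac1u T_1(uc(u))=1$ nor the factor $\bigl(1+\tfrac1u T_1(uc(u))\bigr)^n$; your own hedge (``$|(-n)^{\underline k}|\le(2n)^{\underline k}$ or similar'') signals exactly this loss, and the hoped-for rescue --- that the Penrose scheme makes $|b_m|\le\bar b_m$ tight enough for the crude sign step to cost nothing exponentially --- is false, because the cancellation that matters occurs \emph{between different monomials} $\prod_i b_{i+1}^{k_i}$ in the Bell sum, not inside each $b_m$.

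What the paper actually does is keep the signed structure exact until the very end. Using the factorisation of Penrose tree weights under \emph{faithful mergings} (Definition \ref{def:faithfulmerging} and the product lemma, plus the factorisation of the integrals via the centre of the merging graph), each monomial in \eqref{eq:virial-Bell-expl} is rewritten as a weighted sum over trees on $[n+1]$, and the combinatorial identity $\sum_{k=1}^m\binom{-n}{k}\binom{m-1}{k-1}=(-1)^m\binom{n}{m}$ collapses the alternating falling factorials into plain binomials, yielding the exact identity \eqref{eq:exactpenrose}: $\frac{\beta_{n+1}}{(n+1)!}=\frac{1}{n+1}[z^n]\bigl(1-T_{\mathrm{Pen},1,\tilde\omega}(z)\bigr)^n$. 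Only at that point are absolute values taken, on the individual tree weights (the relabelling lemma giving $u^{n-1}$ for the $(1+f)$-factors and $C(\beta)$ per tree edge), together with the count $|\mathcal{T}_{\mathrm{Pen},1}[n+1]|=(n-1)^{n-1}$ identifying $T_{\mathrm{Pen},1}=T_1$ of \eqref{eq:T1def}; the triangle inequality then turns $(1-T)^n$ into $(1+\tfrac1u T_1(uz))^n$, with binomial coefficients $\binom{n}{m}$ rather than $\binom{n+m-1}{m}$, and your final Cauchy-estimate-and-optimise step (which is indeed the paper's last step) then does give the claimed radius. Without the merging/splitting machinery and the exact tree identity, the bound you would prove is strictly weaker than the one stated.
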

\addtocounter{theorem}{-1}
}
This result is precisely the same as the one proposed by Groeneveld \cite[Chapter IV, Section 3.7]{G67}, which is proved in section \ref{sec:groeneveldpenrose}.

\section{Comparison of the bound to previous bounds and application to particular models}
\label{sec:examples}
We have the exact equation \eqref{eq:exactpenrose} for the virial coefficients, which may be used to calculate the coefficients in particular examples. One just has to calculate the weights of non-splittable Penrose trees.

For stable potentials we have a bound that is an improvement of Lebowitz Penrose\cite{LP64} that we can see in in figure \ref{fig:comparison}. In particular for positive potentials, we obtain $\mathcal{R}_{\mathrm{Vir}} \geq 0.237961 C(\beta)^{-1}$, whereas the Lebowitz and Penrose bound gives $\mathcal{R}_{\mathrm{Vir}} \geq 0.144766998 C(\beta)^{-1}$.

\begin{figure}[here]\begin{center}
\rotatebox{-1}{\includegraphics[width=0.3\textheight,angle=-89]{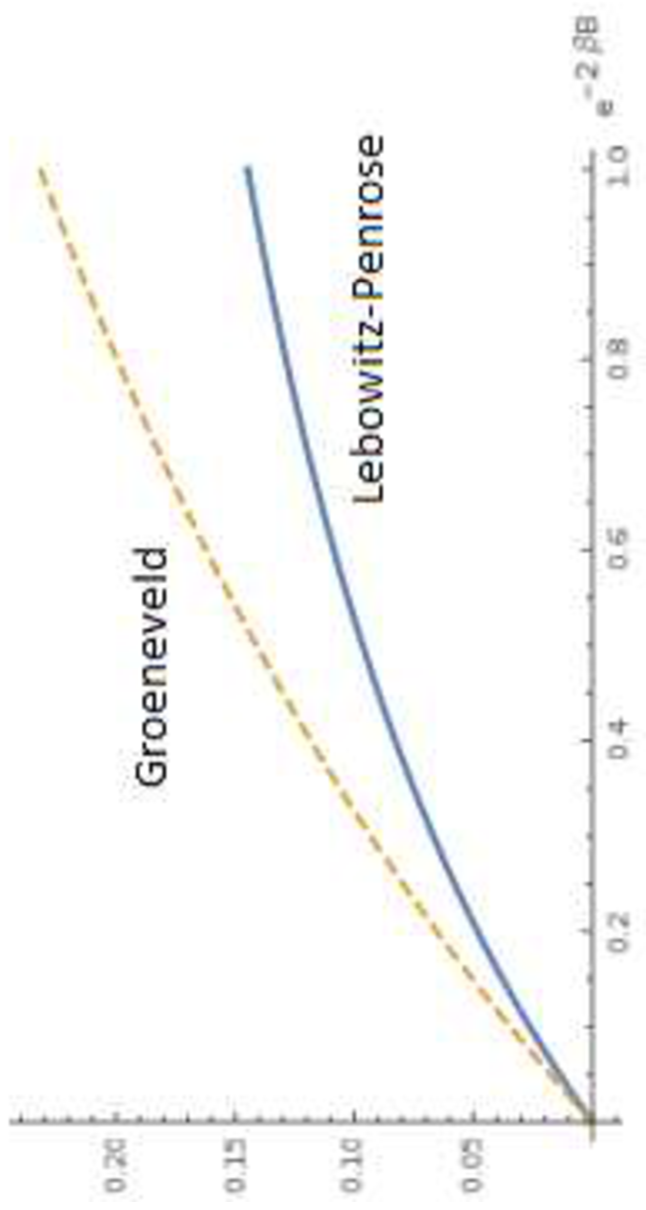}}
\caption{A comparison of the lower bound  for the radius of convergence of the virial expansion obtained by Lebowitz-Penrose with that given by Groeneveld \label{fig:comparison}}
\end{center}\end{figure}

\section{Penrose is the same as Groeneveld}
\label{sec:groeneveldpenrose}

In \cite{G67}, Groeneveld obtains, for stable potentials, the bound $\mathcal{R}_{\mathrm{Vir}} > \alpha(u)C(\beta)^{-1}$, where $\alpha(u)$ is the smallest positive root of $\alpha(u)e^{-\alpha(u)}=\frac{1}{(1+u)e}$. His method relies upon finding recursion relations for graphs, expressing these as a system of differential equations in three variables for multivariable generating functions and solving these equations in order to obtain an expression for the virial coefficients. Here we indicate how our result actually gives the same. 

In order to show that the two bounds are equivalent, we use the fact that the rooted tree generating function is the inverse of $s \mapsto se^{-s}$ , which can be found in \cite{CGHJK96}. The function $T_1(z)$ as defined in \eqref{eq:T1def} is shown to be equal to $T_{\mathrm{Pen},1}(z)$ in section \ref{sec:penrosepartitionscheme} and we use in this section two identities proved in section \ref{sec:penrosepartitionscheme} for $T_{\mathrm{Pen},1}(z)$, but we write these in terms of $T_1(z)$ to maintain consistency with the notation of the previous section.

\begin{lemma}
The derivative of the generating function $T_1(z)$, evaluated at $se^{-s}$ satisfies the following identity:
\be T_{1}'(se^{-s})=\frac{1}{1-s}. \label{eq:relation1} \ee
\end{lemma}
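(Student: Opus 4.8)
The plan is to work directly from the definition \eqref{eq:T1def} of $T_1(z)$ and to identify $T_1$ (or a closely related function) with the standard rooted tree generating function $T(z) = \sum_{n\ge 1} n^{n-1} z^n/n!$, whose defining functional equation $T(z) = z e^{T(z)}$ is equivalent to saying that $T$ is the compositional inverse of $s \mapsto s e^{-s}$, as quoted in the sentence preceding the lemma. First I would rewrite
\be
T_1(z) = z + \sum_{n=1}^\infty n^n \frac{z^{n+1}}{(n+1)!},
\ee
and observe that the coefficient $n^n/(n+1)!$ is exactly $(n+1)^{(n+1)-2}\cdot$(something)? — more precisely, I would relate $\sum_{n\ge 1} n^n z^{n+1}/(n+1)!$ to $\int_0^z$ of the rooted tree series, or to $T(z) - \tfrac12 T(z)^2$, using the well-known identity $\sum_{n\ge 1} n^{n-1} z^n/n! = T(z)$ and $\sum_{n\ge 1} n^n z^n/n! = T(z)/(1-T(z))$. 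Differentiating termwise, $T_1'(z) = 1 + \sum_{n\ge 1} n^n z^n/n! = 1 + \dfrac{T(z)}{1-T(z)} = \dfrac{1}{1-T(z)}$.

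Given that identity for $T_1'$, the lemma is immediate: substituting $z = s e^{-s}$ and using $T(se^{-s}) = s$ (the inverse relationship), we get
\be
T_1'(se^{-s}) = \frac{1}{1 - T(se^{-s})} = \frac{1}{1-s},
\ee
which is exactly \eqref{eq:relation1}. So the real content is the termwise identity $T_1'(z) = 1/(1-T(z))$, equivalently $\sum_{n\ge 1} n^n z^n/n! = T(z)/(1-T(z))$.

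For that identity I would start from $T(z) = z e^{T(z)}$, differentiate to get $T'(z) = T(z)/(z(1-T(z)))$, and then compute $\sum_{n\ge 1} n^n z^n/n! = z \tfrac{d}{dz}\big(\sum_{n\ge1} n^{n-1}z^n/n!\big) = z T'(z) = T(z)/(1-T(z))$; alternatively one can verify $T/(1-T)$ has the right coefficients directly via Lagrange inversion, since $[z^n]\,\phi(T(z)) = \tfrac1n [s^{n-1}]\,\phi'(s) e^{ns}$ for the inverse of $se^{-s}$. The main (and only mild) obstacle is bookkeeping with the index shift between $T_1$, whose series is indexed so that the $z^{n+1}$ term has coefficient $n^n/(n+1)!$, and the standard rooted tree series; once one checks that $\tfrac{d}{dz}$ of the $n^n z^{n+1}/(n+1)!$ term is $n^n z^n/n!$ and that the extra $+z$ in \eqref{eq:T1def} contributes the constant $1$ to $T_1'$, everything lines up. No analytic subtleties arise because all series involved converge in a neighbourhood of $z=0$ (radius $1/e$), and $se^{-s}$ maps a neighbourhood of $s=0$ biholomorphically onto a neighbourhood of $z=0$ with $s = T(z)$.
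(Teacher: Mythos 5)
Your argument is correct and is essentially the paper's own proof: the paper simply quotes the identity $T_{1}'(z)=\frac{1}{1-T^{\bullet}(z)}$ from \eqref{eq:penasrootedtree} and substitutes $z\mapsto se^{-s}$ together with $T^{\bullet}(se^{-s})=s$, which is exactly your final step. The only difference is that you rederive that identity from the explicit series \eqref{eq:T1def}, using $T_1'(z)=1+\sum_{n\geq 1}n^{n}z^{n}/n!$ and $\sum_{n\geq 1}n^{n}z^{n}/n! = zT^{\bullet\prime}(z)=\frac{T^{\bullet}(z)}{1-T^{\bullet}(z)}$ (the content of Corollary \ref{cor:doublerooted}), rather than citing it --- a sound, self-contained variant of the same approach.
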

\begin{proof}
We use the relation \eqref{eq:penasrootedtree}: $T_{1}'(z)=\frac{1}{1-T^{\bullet}(z)}$. If we change variables $z \mapsto se^{-s}$ and use that $T^{\bullet}(se^{-s})=s$, then we easily obtain:
\be T_{1}'(se^{-s})=\frac{1}{1-s}. \ee
\end{proof}

\begin{lemma}
The generating function $T_1(z)$ evaluated at $se^{-s}$ satisfies:
\be T_{1}(se^{-s})=1-e^{-s}. \label{eq:relation2} \ee
\end{lemma}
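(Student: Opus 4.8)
The plan is to reduce the identity to the previous lemma by differentiating in $s$. First I would observe that both sides of \eqref{eq:relation2} are honest analytic functions of $s$ near the origin — equivalently, formal power series in $s$ with vanishing constant term. For the right-hand side this is clear, and for the left-hand side it holds because $se^{-s}$ has no constant term while $T_1$ is a power series with $T_1(0)=0$ by \eqref{eq:T1def}. In particular both sides vanish at $s=0$, so by uniqueness of antiderivatives it suffices to prove that their $s$-derivatives coincide.

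Next I would compute the derivative of the left-hand side by the chain rule: $\frac{d}{ds}T_1(se^{-s}) = T_1'(se^{-s})\cdot\frac{d}{ds}\!\left(se^{-s}\right)$. By the identity \eqref{eq:relation1} the first factor equals $\frac{1}{1-s}$, while $\frac{d}{ds}(se^{-s}) = e^{-s}-se^{-s} = (1-s)e^{-s}$, so the two factors multiply to $e^{-s}$. Since $\frac{d}{ds}(1-e^{-s}) = e^{-s}$ as well, the two sides of \eqref{eq:relation2} have the same derivative; together with the matching value at $s=0$, integrating from $0$ to $s$ gives $T_1(se^{-s}) = 1-e^{-s}$, which is the claim.

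There is essentially no obstacle here: once the chain rule is applied, the lemma is a one-line consequence of \eqref{eq:relation1}, and the only thing to watch is the cancellation of the factor $1-s$ between $T_1'(se^{-s})$ and $\frac{d}{ds}(se^{-s})$. (One could instead extract the coefficient of $s^n$ on each side and recognise the resulting identity as an Abel-type binomial sum, but the differentiation argument is shorter and sidesteps that computation.)
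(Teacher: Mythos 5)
Your proof is correct, but it takes a different route from the paper's. The paper proves the lemma by direct substitution into the closed-form identity \eqref{eq:T1inT}, i.e.\ $T_1(z)=1-\frac{z}{T^{\bullet}(z)}$, together with $T^{\bullet}(se^{-s})=s$, which gives $T_1(se^{-s})=1-e^{-s}$ in one line. You instead avoid \eqref{eq:T1inT} entirely and derive the statement from the preceding lemma \eqref{eq:relation1}: by the chain rule $\frac{d}{ds}T_1(se^{-s})=T_1'(se^{-s})\,(1-s)e^{-s}=e^{-s}$, and since both sides vanish at $s=0$ (using $T_1(0)=0$ from \eqref{eq:T1def}) the identity follows by integration; this is a valid argument, whether read analytically near $s=0$ or as formal power series in $s$, where composition with $se^{-s}$ (no constant term) and the chain rule are legitimate. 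What each approach buys: yours makes the second lemma a corollary of the first, so only the single identity \eqref{eq:penasrootedtree} from section \ref{sec:penrosepartitionscheme} is ultimately invoked, at the cost of the extra differentiate-and-integrate step and the (mild) justification that antiderivatives with matching value at $0$ agree; the paper's argument is shorter and purely algebraic, but requires importing the additional identity \eqref{eq:T1inT}. Both are sound.
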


\begin{proof}
We know from \eqref{eq:T1inT} that $T_{1}(z)=1-\frac{z}{T^{\bullet}(z)}$, making the substitution $z \mapsto se^{-s}$, we immediately get:
\be T_{1}(se^{-s})=1-e^{-s} . \ee
\end{proof}

\begin{proposition}
The lower bound of the radius of convergence of the virial expansion:
\be \mathcal{R}_{\mathrm{Vir}} \geq C(\beta)^{-1} \frac{c(u)}{1+\frac{1}{u}T_{1}(uc(u))} \ee
where $c(u)$ is the smallest positive solution to  $uc(u)T_{1}'(uc(u))-T_{1}(c(u))=1$ is equal to $\alpha(u)C(\beta)^{-1}$, where $\alpha(u)$ is the smallest positive solution to $\alpha(u)e^{-\alpha(u)}=\frac{1}{(1+u)e} $.
\end{proposition}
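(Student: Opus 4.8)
The plan is to run both sides through a single parametrisation coming from the tree function. The generating function $T_1(z)$ in \eqref{eq:T1def} has radius of convergence $1/e$ (the coefficient $n^n/(n+1)!$ grows like $e^n$ up to subexponential factors), so the relevant root obeys $uc(u)\le 1/e$, and I would write $uc(u)=se^{-s}$ with $s\in(0,1)$ uniquely determined; this is precisely the range on which the substitution underlying \eqref{eq:relation1} and \eqref{eq:relation2} is valid. On this substitution those two identities give $T_1'(uc(u))=\tfrac{1}{1-s}$ and $T_1(uc(u))=1-e^{-s}$.

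Next I would insert these into the equation defining $c(u)$, namely $c(u)T_1'(uc(u))-\tfrac1u T_1(uc(u))=1$ (equivalently $uc(u)T_1'(uc(u))-T_1(uc(u))=u$), and simplify. Writing $c(u)=\tfrac1u s e^{-s}$, the equation becomes $\tfrac{se^{-s}}{1-s}+e^{-s}=u+1$, and since $\tfrac{s}{1-s}+1=\tfrac{1}{1-s}$ this collapses to the single clean relation
\be
\frac{e^{-s}}{1-s}=u+1,\qquad\text{i.e.}\qquad e^{-s}=(u+1)(1-s).
\ee
This identity does all the work. Substituting it into the bound, I would compute $1+\tfrac1u T_1(uc(u))=\tfrac{u+1-e^{-s}}{u}$ and $c(u)=\tfrac{se^{-s}}{u}$, so that
\be
\frac{c(u)}{1+\tfrac1u T_1(uc(u))}=\frac{se^{-s}}{u+1-e^{-s}} .
\ee
Replacing $u+1-e^{-s}=(u+1)-(u+1)(1-s)=(u+1)s$ turns the right side into $\tfrac{e^{-s}}{u+1}=1-s$. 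Then, setting $\alpha:=1-s$, the relation above reads $(1-s)e^{s}=\tfrac{1}{u+1}$, hence $\alpha e^{-\alpha}=(1-s)e^{-(1-s)}=(1-s)e^{s}e^{-1}=\tfrac{1}{(1+u)e}$, which is exactly the equation defining $\alpha(u)$; and the bound equals this same $\alpha=1-s$, giving $\mathcal{R}_{\mathrm{Vir}}\ge \alpha(u)C(\beta)^{-1}$.

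Finally I would check that the branches match. As $s$ ranges over $(0,1)$ the function $\tfrac{e^{-s}}{1-s}$ increases from $1$ to $+\infty$, so for each $u>0$ the relation $\tfrac{e^{-s}}{1-s}=u+1$ has a unique solution $s\in(0,1)$, and this is the one corresponding to the smallest positive root $c(u)=\tfrac1u se^{-s}$ in the convergence disc of $T_1$. The corresponding $\alpha=1-s$ then lies in $(0,1)$, which is the smallest positive root of $\alpha e^{-\alpha}=\tfrac{1}{(1+u)e}$ (the second root lying in $(1,\infty)$). This branch bookkeeping — verifying that ``smallest positive root of the $c$-equation'' and ``smallest positive root of the $\alpha$-equation'' are carried to each other under the map $s\mapsto\tfrac1u se^{-s}$ and $s\mapsto 1-s$ respectively — is the only genuinely delicate point; the algebraic reduction itself is short once \eqref{eq:relation1} and \eqref{eq:relation2} are in hand.
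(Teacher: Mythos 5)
Your proof is correct and follows essentially the same route as the paper's: the reparametrisation $uc(u)=se^{-s}$ (the paper's $t(u)$), the two identities $T_1'(se^{-s})=\frac{1}{1-s}$ and $T_1(se^{-s})=1-e^{-s}$, the key relation $\frac{e^{-s}}{1-s}=1+u$, and the change of variable $\alpha=1-s$. Your closing monotonicity argument matching the smallest positive roots of the two equations is a small addition beyond the paper's purely formal substitution, but the core argument is the same.
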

\begin{proof}
Reparametrise $uc(u)=t(u)e^{-t(u)}$ and substitute into the defining equation for $c(u)$:
\be 
t(u)e^{-t(u)}T_{1}'(t(u)e^{-t(u)})-T_{1}(t(u)e^{-t(u)})=u.
\ee
We use \eqref{eq:relation1} and \eqref{eq:relation2} to simplify the expression to:
\begin{align}
\frac{t(u)e^{-t(u)}}{1-t(u)}-1 + e^{-t(u)} &= u, \notag \\
\frac{e^{-t(u)}}{1-t(u)} &=1+u. \label{eq:urelation}
\end{align}
We transform $\alpha(u)=1-t(u)$ and obtain:
\begin{align*} 
(1+u)\alpha(u)e&=e^{\alpha(u)},\\
\alpha(u)e^{-\alpha(u)} &= \frac{1}{(1+u)e}.\end{align*}
For the bound, we make the same substitution:
\be \mathcal{R}_{\mathrm{Vir}} \geq \frac{t(u)e^{-t(u)}}{u+T_{1}(t(u)e^{-t(u)})}, \ee
and use \eqref{eq:relation2} to obtain
\be \mathcal{R}_{\mathrm{Vir}} \geq \frac{t(u)}{(1+u)e^{t(u)}-1}. \ee
We use  \eqref{eq:urelation} to obtain
\be \mathcal{R}_{\textrm{Vir}} \geq \frac{t(u)}{\frac{1}{1-t(u)}-1}=1-t(u)=\alpha(u) \ee
as required.
\end{proof}

\section{Partition Schemes}
\label{sec:partitionschemes}
\label{sec:genpartitionschemes}
The improved bounds of this paper rely upon partition schemes for connected graphs in terms of tree graphs. A helpful reference for partition schemes and their application in the cluster expansion of hard-core models is found in \cite{FP07}. 

We define $[n]:=\{1, \cdots, n\}$. The collection of all connected graphs, respectively tree graphs, on the label set $[n]$ is denoted by  $\mathcal{C}[n]$,  respectively $\mathcal{T}[n]$. 

We define a partial order on $\mathcal{C}[n]$ by bond inclusion: $G \leq \tilde{G}$ $\iff$ $E(G) \subset E(\tilde{G})$. For $G \leq H$, we define the set $[G,H]=\{K\vert G \leq K \leq H\}$. We consider partitions of $\mathcal{C}[n]$, indexed by $\tau \in \mathcal{T}[n]$, comprising of blocks of the form $[\tau, R(\tau)]$, where $R: \mathcal{T}[n] \to \mathcal{C}[n]$. The mapping $R$ is called a partition scheme. More formally, we have the following definition:

\begin{definition}[Partition Scheme]
A partition scheme for a family $\mathcal{C}[n]$ of connected graphs is any map $R:\mathcal{T}[n] \to \mathcal{C}[n]$, $\tau \mapsto R(\tau)$, such that:
\begin{itemize}
\item[i)] $E(R(\tau)) \supset E(\tau)$ for all $\tau \in \mathcal{T}[n]$, and
\item[ii)] $\mathcal{C}[n]$ is the disjoint union of the sets $[\tau,R(\tau)]$ for $\tau \in \mathcal{T}[n]$.
\end{itemize}
\end{definition}

The advantage of a Boolean partition scheme is that it allows us to rewrite the connected graph sum as:

\be \sum\limits_{G \in \mathcal{C}[n]}\prod\limits_{e \in E(G)} f_e = \sum\limits_{\tau \in \mathcal{C}[n]}\prod\limits_{e \in E(\tau)}f_e \prod\limits_{\varepsilon \in E(R(\tau)) \setminus E(\tau)}(1+f_{\varepsilon}). \label{eq:scheme} \ee

This section outlines the general method of achieving bounds from partition schemes. In the next section, we apply this to the Penrose partition scheme. 

This approach requires, firstly, a well defined mapping from products of trees arising from cluster coefficients in \eqref{eq:virial-Bell} to trees on the vertex set $[n+1]$ such that the weights are preserved. This is achieved through the product structure. It is then necessary to understand the combinatorial factors that arise from this \emph{many}- to-one mapping. After achieving an exact expression, we make appropriate bounds for the tree weights, depending on whether we assume positive or stable potentials. 

\subsection{Faithful Mergings of Trees}

In expression \eqref{eq:virial-Bell} we have an expression for the virial coefficient as a many variable polynomial in the cluster coefficients. This polynomial is explicitly given by:

\be
\beta_{n+1} = \sum_{k=1}^n \binom{-n}{k} k! \sum_{{(k_1,\ldots,k_n): k_i\geq 0}\atop{\sum_i k_i = k, \sum_i ik_i = n}} \frac{n!}{\prod_{i=1}^n k_i!} \prod_{i=1}^{n} \left(\frac{b_{i+1}}{i!}\right)^{k_i}. \label{eq:virial-Bell-expl}
\ee

Equation \eqref{eq:scheme} shows how we can relate the cluster coefficients to a sum over trees using any partition scheme. Now, consider any monomial $\prod_i b_{i+1}^{k_i}$ appearing in \eqref{eq:virial-Bell-expl}. We want to find an analogous expression for this monomial in terms of trees on $n+1$ vertices. 

Using any partition scheme, the monomial $\prod_i b_{i+1}^{k_i}$ can be related to a sum of products over trees. Any term in this latter sum is a product which involves a collection of $k$ trees, say $\{\tau_1,\ldots,\tau_k\}$. We want to associate to this collection of $k$ separate trees a single tree on $n+1$ vertices. This is done by merging trees.

\begin{definition}[Merging of trees]
Let $I$ be a set of labels and let $\{\tau_1,\ldots,\tau_k\}$ be a collection of $I$-labelled trees. Let $G$ be the graph with vertex set $I$ and edge set $\{\{i,j\} \vert \, \{i,j\} \in \tau_l \, \text{for some } l\}$. The edges come with multiplicity $m(\{i,j\})=|\{l \vert \, E(\tau_l) \ni \{i,j\}\}$. Then $G$ is called the\emph{ merging} of $\{\tau_1,\ldots,\tau_k\}$.
\end{definition}

It is important to note that the merging of a set of labelled trees heavily depends on the labeling of the trees. In general, a merging may contain multiple edges between a pair of vertices, loops and cycles. It may even not be connected at all! See Figure 1 for an example.

\begin{figure}
$\begin{array}{r|l}
\begin{tikzpicture}
\tikzstyle{vertex}=[circle,draw=black,minimum size=17pt,inner sep=0pt]
	\node[vertex]	(a)	 at (1,2) {1};
	\node[vertex]	(b) at (0,0) {3};
	\node[vertex]	(c) at (2,0) {4};
	\node[vertex]	(d) at (3,2) {5};
	\node[vertex]	(e) at (3,0) {2};
	\node[vertex]	(f) at (4,0) {3};
	\node[vertex]	(g) at (5,2) {4};
	\node[vertex]	(h) at (6,0) {1};
	
	\draw (a)--(b);
	\draw (a)--(c);
	\draw (d)--(e);
	\draw (g)--(f);
	\draw (g)--(h);
\end{tikzpicture}
&
\begin{tikzpicture}
\tikzstyle{vertex}=[circle,draw=black,minimum size=17pt,inner sep=0pt]
	\node[vertex]	(a)	 at (1,2) {3};
	\node[vertex]	(b) at (0,0) {1};
	\node[vertex]	(c) at (2,0) {4};
	\node[vertex]	(d) at (3,2) {2};
	\node[vertex]	(e) at (3,0) {5};
	
	\draw (a)--(b);
	\draw (a)--(c);
	\draw (b.-10) -- (c.190);
    \draw (c.-190) -- (b.10);
	\draw (d)--(e);
\end{tikzpicture}
\end{array}$
\caption{An example of ill behaviour of mergings. On the left-hand side we have three trees labeled in $\{1,2,3,4,5\}$, and on the right-hand side we have the merging of these trees.}
\end{figure}
 
If a merging of a set of labelled trees is also a tree, we say that the set of labelled trees is \emph{properly} labelled.

Denote by $V_i$ the label set for the tree $\tau_i$ arising from a merging. 
\begin{proposition}
A necessary condition for the merging to be proper is that $|V_i \cap V_j| \leq 1$ for all $i \neq j \in [k]$. 
\end{proposition}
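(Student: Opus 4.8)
The plan is to argue by contraposition: I will show that if some pair of label sets $V_i$ and $V_j$ with $i \neq j$ satisfies $|V_i \cap V_j| \geq 2$, then the merging $G$ of $\{\tau_1,\ldots,\tau_k\}$ cannot be a tree, so the labelling is not proper. The key observation is that a tree on a given vertex set has a unique path between any two of its vertices, so two distinct trees on overlapping vertex sets that share two common vertices are forced to create either a multi-edge or a genuine cycle in the merging.

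First I would fix $i \neq j$ with $|V_i \cap V_j| \geq 2$ and pick two distinct labels $a, b \in V_i \cap V_j$. Since $\tau_i$ is a connected tree on $V_i$ containing both $a$ and $b$, there is a unique path $\pi_i$ from $a$ to $b$ inside $\tau_i$; likewise there is a unique path $\pi_j$ from $a$ to $b$ inside $\tau_j$. Both $\pi_i$ and $\pi_j$ are paths in the merging $G$ (all their edges appear in $E(G)$ by definition of merging). Now I split into two cases. If $\pi_i \neq \pi_j$ as edge sets, then $\pi_i \cup \pi_j$ contains a closed walk through $a$ and $b$ that is not a single edge traversed twice in the trivial sense, and from this one extracts a cycle in $G$ (either a genuine cycle on $\geq 3$ vertices, or a double edge if the paths differ on a single edge between $a$ and $b$) — in all cases $G$ has a cycle and hence is not a tree. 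If instead $\pi_i = \pi_j$ as edge sets, then every edge $e$ on this common path lies in both $E(\tau_i)$ and $E(\tau_j)$, so $m(e) \geq 2$; thus $G$ has an edge of multiplicity at least two, which again means $G$ is not a tree (a tree, being simple, has all multiplicities equal to $1$).

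Putting the cases together: in every scenario where $|V_i \cap V_j| \geq 2$ for some $i \neq j$, the merging fails to be a tree, which gives the contrapositive of the claim. I would state this cleanly as: the merging being proper (i.e.\ a tree) forces $|V_i \cap V_j| \leq 1$ for all $i \neq j \in [k]$.

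I do not expect any serious obstacle here; the statement is essentially a uniqueness-of-paths argument. The only point requiring mild care is the bookkeeping in the first case, where one must be careful that $\pi_i$ and $\pi_j$ are paths between the \emph{same} endpoints $a,b$ and argue that distinct such paths always yield a cycle (including the degenerate subcase of two parallel edges between $a$ and $b$, which the definition of merging explicitly allows via the multiplicity function). Handling the degenerate subcase is exactly why it is cleanest to phrase the conclusion in terms of ``$G$ is not a tree'' (which rules out both multi-edges and cycles) rather than insisting on a cycle in the strict sense.
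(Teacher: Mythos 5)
Your argument is correct and is essentially the same as the paper's: both proofs take two shared labels, compare the unique paths joining them inside $\tau_i$ and $\tau_j$, and conclude that one gets either a repeated edge (multiplicity $\geq 2$) or a cycle in the merging, so it cannot be a tree. If anything, your case split on whether the two paths coincide as edge sets is slightly more careful than the paper's phrasing, which speaks of ``two disjoint paths'' without addressing partial overlap.
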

\begin{proof}
If $|V_i \cap V_j| \geq 2$, this means we have (at least) two vertices say $l$ and $m$ in common between $\tau_i$ and $\tau_j$. The first problem is whether we have a repeated edge in common between the two graphs, if we do then it is not properly labelled. Otherwise we have two disjoint paths from $l$ to $m$ arising from the path in $\tau_i$ and the disjoint path in $\tau_j$ since they are trees. This means that we will have a cycle in the merged graph and it therefore won't be a tree. Hence $|V_i \cap V_j| \leq 1$ is a necessary condition.
\end{proof}

We now assume that $|V_i \cap V_j| \leq 1$ for all $i \neq j \in [k]$ and provide a necessary and sufficient condition for the merging to be proper.

\begin{definition}[Merging Graph]
For a given merging with label sets $(V_i)_{i \in [k]}$, we define the merging graph $\mathcal{M}((V_i)_{i \in [k]})$ through its edge and vertex sets:
\begin{align*}
V(\mathcal{M}((V_i)_{i \in [k]}))&:=[k] \cup \left( \bigcup_{1 \leq i<j \leq k} (V_i \cap V_j) \times \{\star\}\right), \\ 
E(\mathcal{M}((V_i)_{i \in [k]}))&:=\{\{i,(V_i \cap V_j  ,\star)\} \, \{j, (V_i \cap V_j, \star)\} \vert \, V_i \cap V_j \neq \emptyset \},
\end{align*}
where $\emptyset \times \{\star\} := \emptyset$.
\end{definition}

\begin{proposition}
\label{prop:merging}
Given that the vertex sets $V_i$ satisfy $|V_i \cap V_j| \leq 1$ for all $i \neq j \in [k]$, a merging of $(\tau_1, \cdots, \tau_k)$ is properly labelled if and only if the merging graph $\mathcal{M}((V_i)_{i \in [k]})$ is a tree.
\end{proposition}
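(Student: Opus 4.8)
The plan is to reduce the statement ``the merging $G$ of $(\tau_1,\dots,\tau_k)$ is a tree'' to two separable facts about $G$ — its number of connected components, which I will write $c(G)$, and its ``Euler characteristic'' $|E(G)|-|V(G)|$ — and to show that each of these agrees with the corresponding quantity for $\mathcal{M}:=\mathcal{M}((V_i)_{i\in[k]})$. Since a finite graph $\Gamma$ is a tree if and only if $c(\Gamma)=1$ and $|E(\Gamma)|-|V(\Gamma)|=-1$ (equivalently, $\Gamma$ is connected and acyclic — and this holds in the multigraph world too, a connected multigraph on $n$ vertices having $n-1$ edges precisely when it is a loopless, multiplicity-free tree), matching both quantities gives the biconditional at once.

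First I would isolate the single place where the hypothesis $|V_i\cap V_j|\le 1$ is used: no edge can occur in two distinct trees $\tau_i,\tau_j$, since the two endpoints of such a shared edge would lie in $V_i\cap V_j$; within a single tree every edge occurs once and no tree has a loop. Hence $G$ carries no multiple edges and no loops, so $|E(G)|=\sum_{i=1}^k|E(\tau_i)|=\sum_{i=1}^k(|V_i|-1)$. This is exactly the step that breaks for the configuration in Figure~1, where $|V_1\cap V_3|=3$.

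Next comes the bookkeeping identity $|E(G)|-|V(G)|=|E(\mathcal M)|-|V(\mathcal M)|$. Write $I=\bigcup_i V_i$, let $S\subseteq I$ be the set of labels lying in at least two of the $V_i$, put $s=|S|$, and for $v\in I$ let $m_v=\#\{i:\ v\in V_i\}$. Double-counting incidences gives $\sum_i|V_i|=\sum_{v\in I}m_v$, and since $m_v=1$ off $S$ this yields $\sum_i|V_i|-|I|=\sum_{v\in S}(m_v-1)$. On the $G$ side, $|V(G)|=|I|$ and $|E(G)|=\sum_i|V_i|-k$ by the previous paragraph, so $|E(G)|-|V(G)|=\sum_{v\in S}m_v-s-k$. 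On the $\mathcal M$ side, $V(\mathcal M)=[k]\sqcup(S\times\{\star\})$ has $k+s$ elements (different pairs $i<j$ sharing the same label $v$ contribute the same vertex $(v,\star)$, and only labels of $S$ occur), while $\deg_{\mathcal M}\big((v,\star)\big)=m_v$, so $|E(\mathcal M)|=\sum_{v\in S}m_v$ and $|E(\mathcal M)|-|V(\mathcal M)|=\sum_{v\in S}m_v-k-s$ — the same number. The only care needed here is re-checking the vertex and edge counts of $\mathcal M$ when a label sits in three or more trees, in which case $(v,\star)$ is simply the centre of a star on the relevant tree-indices.

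Finally I would match the component counts by comparing both graphs to the auxiliary graph $H$ on vertex set $[k]$ with an edge between $i$ and $j$ exactly when $V_i\cap V_j\neq\emptyset$. Every edge of $G$ lies inside some $V_i$, and conversely two trees whose label sets meet are joined in $G$ through their common vertex; chasing chains $i_0,i_1,\dots$ with consecutive label sets overlapping shows that the components of $G$ correspond bijectively, via $i\mapsto$ the component of $V_i$, to those of $H$. Likewise in $\mathcal M$ two tree-indices are joined by a path if and only if they are joined in $H$ (each passage through a vertex $(v,\star)$ is an overlap of label sets), and every $(v,\star)$ is adjacent to the $\ge 2$ tree-indices containing $v$, hence lies in a component meeting $[k]$; so the components of $\mathcal M$ also correspond to those of $H$, giving $c(G)=c(H)=c(\mathcal M)$. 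Combined with the Euler-characteristic identity, $G$ is connected with $|E(G)|=|V(G)|-1$ if and only if $\mathcal M$ is connected with $|E(\mathcal M)|=|V(\mathcal M)|-1$, that is, $G$ is a tree if and only if $\mathcal M$ is a tree. The only mildly delicate point in the argument is the incidence bookkeeping of the third step, together with being sure the hypothesis $|V_i\cap V_j|\le1$ is invoked precisely where it is genuinely required; everything else is routine.
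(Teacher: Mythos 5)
Your argument is correct, but it proceeds along a genuinely different route from the paper's. The paper proves the equivalence structurally, in two pieces: it argues that $G$ is connected if and only if $\mathcal{M}((V_i)_{i\in[k]})$ is connected, and that cycles correspond — a cycle in $\mathcal{M}$ expands to a cycle in $G$ by replacing each tree-index vertex with the tree path between its two neighbouring intersection points, while a cycle in $G$ contracts back by collapsing each maximal tree path to its index vertex (the hypothesis $|V_i\cap V_j|\le 1$ is what prevents this contraction from degenerating). You instead use the numerical characterisation ``tree $\iff$ connected and $|E|=|V|-1$'': the hypothesis enters once, to guarantee no edge is shared between two trees so that $|E(G)|=\sum_i(|V_i|-1)$, after which the incidence bookkeeping gives $|E(G)|-|V(G)|=|E(\mathcal M)|-|V(\mathcal M)|=\sum_{v\in S}m_v-k-s$, and the auxiliary intersection graph $H$ on $[k]$ identifies the component counts of $G$, $H$ and $\mathcal M$. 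What your approach buys is that it sidesteps the cycle-expansion/contraction argument (the least obvious step in the paper, stated there rather tersely) and makes completely explicit where and why the hypothesis $|V_i\cap V_j|\le 1$ is needed; what the paper's approach buys is a direct structural correspondence between cycles in $G$ and in $\mathcal M$, with no counting. Your connectivity step (``chasing chains'' of overlapping label sets) is left informal, but it is the same routine argument the paper also states in one sentence, so I would not count it as a gap; just note that injectivity of the component map uses that consecutive edges of a path in $G$ belong to trees with intersecting label sets, and that every vertex of $G$ lies in some $V_i$.
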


\begin{proof}
 To prove the proposition we use the two defining properties of a tree: that it is connected and acyclic.

The merging graph $\mathcal{M}((V_i)_{i \in [k]})$ is connected if and only if the merged graph $G$ is connected since there is a path between $l \in \tau_i$ and $m \in \tau_j$ in graph G  if and only if $i$ and $j$ are in the same connected component of $\mathcal{M}((V_i)_{i \in [k]})$.

We now show that there is a cycle in the merging graph $\mathcal{M}((V_i)_{i \in [k]})$ if and only if there is a cycle in the merged graph $G$. If we have a cycle in $\mathcal{M}((V_i)_{i \in [k]})$, then replacing each vertex $i$ in the cycle by the corresponding path in $\tau_i$ between its two neighbours in the cycle, we end up with a cycle in $G$. The cycle in $G$ can be transformed into a cycle in $\mathcal{M}((V_i)_{i \in [k]})$ by reducing each tree path to the corresponding vertex as no cycles will appear in an individual tree. 
\end{proof}

\begin{remark}
We realise here that the graph $\mathcal{M}((V_i)_{i \in [k]})$ is given a similar structure to the block cutpoint tree used to understand the two-connected components of connected graphs. The vertices at which we attach two trees are like the cutpoints and each individual tree is treated like a block. 
\end{remark}

Now, let $R$ be any partition scheme and consider any product $\prod_i b_{i+1}^{k_i}$ appearing in \eqref{eq:virial-Bell-expl}. As mentioned earlier, this product can be written as a sum of products of $k$ trees. Consider one such product and say it involves the collection of trees $\{\tau_1,\ldots,\tau_k\}$, then we can associate a tree on $n+1$ vertices, say $\tau$, to this collection by properly labeling the trees $\tau_1,\ldots,\tau_k$, where the labels are chosen from $[n+1]$. The tree $\tau$ is then simply the merging of $\{\tau_1,\ldots,\tau_k\}$. Since the $k$ smaller trees have $\sum_i (i+1)k_i=n+k$ vertices in total, it follows that we have to identify $k-1$ vertices. While any proper labeling with $k-1$ identifications will do, we require that the merging respects the partition scheme $R$ in the following sense:

\begin{definition}[Faithful merging]\label{def:faithfulmerging}
Let $R$ be any partition scheme and let $\{\tau_1,\ldots,\tau_k\}$ be a collection of properly $[n+1]$-labelled trees. An $R$-faithful merging of $\{\tau_1,\ldots,\tau_k\}$ is a labelled tree $\tau$ such that:
\begin{itemize}
\item[i)] $E(\tau)=\cup_{i=1}^k E(\tau_i)$, and
\item[ii)] $E\left(R(\tau)\right)\setminus E(\tau)=\cup_{i=1}^k \left(E\left(R(\tau_i)\right)\setminus E(\tau_i)\right)$.
\end{itemize}
We call $\{\tau_1,\ldots,\tau_k\}$ a splitting of $\tau$.
\end{definition}

We define the weight of a tree $\omega(\tau)$ by:

\be \omega(\tau):= \prod\limits_{e \in E(\tau)}f_e \prod\limits_{\varepsilon \in E(R(\tau))\setminus E(\tau)}(1+f_{\varepsilon}).\label{eq:weight} \ee

The importance of a faithful merging follows from the following factorisation property.

\begin{lemma}[Product Structure of Faithful Mergings]
Let $\{\tau_1,\ldots,\tau_k\}$ be given and let $\tau$ be a faithful merging of this given set of trees. We have the identity:
\be \prod\limits_{i=1}^k \omega(\tau_i) = \omega(\tau). \ee
\end{lemma}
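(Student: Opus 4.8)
The plan is to unwind the definition of the tree weight \eqref{eq:weight} for $\tau$ and for each $\tau_i$, and then match factors one by one using the two defining properties of a faithful merging from Definition \ref{def:faithfulmerging}. Writing out the left-hand side,
\be
\prod_{i=1}^k \omega(\tau_i) = \prod_{i=1}^k \left( \prod_{e \in E(\tau_i)} f_e \prod_{\varepsilon \in E(R(\tau_i)) \setminus E(\tau_i)} (1+f_\varepsilon) \right),
\ee
the goal is to show this equals $\prod_{e \in E(\tau)} f_e \prod_{\varepsilon \in E(R(\tau)) \setminus E(\tau)} (1+f_\varepsilon)$. The two products on the right are governed precisely by properties i) and ii) of a faithful merging, so the argument splits into handling the $f_e$-factors and the $(1+f_\varepsilon)$-factors separately.

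For the first product: property i) says $E(\tau) = \bigcup_{i=1}^k E(\tau_i)$. The key point I need to check is that this union is in fact \emph{disjoint}, i.e. no edge $\{a,b\}$ lies in both $E(\tau_i)$ and $E(\tau_j)$ for $i \neq j$. This follows because $\{\tau_1,\ldots,\tau_k\}$ is a splitting of a tree $\tau$ (equivalently, a properly labelled collection): if an edge were shared by $\tau_i$ and $\tau_j$, the merging would contain that edge with multiplicity $\geq 2$ and hence not be a simple tree, contradicting proper labelling. Once disjointness is established, $\prod_{i=1}^k \prod_{e \in E(\tau_i)} f_e = \prod_{e \in \bigcup_i E(\tau_i)} f_e = \prod_{e \in E(\tau)} f_e$.

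For the second product: property ii) says $E(R(\tau)) \setminus E(\tau) = \bigcup_{i=1}^k \big( E(R(\tau_i)) \setminus E(\tau_i) \big)$, and again I must argue this union is disjoint so that $\prod_{i=1}^k \prod_{\varepsilon \in E(R(\tau_i)) \setminus E(\tau_i)} (1+f_\varepsilon) = \prod_{\varepsilon \in E(R(\tau)) \setminus E(\tau)} (1+f_\varepsilon)$. Disjointness here requires a small observation: if $\varepsilon \in \big(E(R(\tau_i))\setminus E(\tau_i)\big) \cap \big(E(R(\tau_j))\setminus E(\tau_j)\big)$ with $i\neq j$, then $\varepsilon$ connects a vertex of $V_i$ to a vertex of $V_j$; but since $|V_i \cap V_j| \leq 1$, the endpoints of $\varepsilon$ cannot both lie in $V_i$ and both lie in $V_j$ unless $\varepsilon$ is a ``loop'' at the shared vertex, which is impossible for a genuine edge. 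Hence the sets are disjoint. Combining the two displays gives $\prod_{i=1}^k \omega(\tau_i) = \omega(\tau)$.

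I expect the main obstacle to be the bookkeeping around disjointness of the two unions — in particular making rigorous the claim that an edge of $R(\tau_i)\setminus E(\tau_i)$ has both endpoints inside the single block $V_i$ and therefore cannot coincide with an edge coming from a different block $V_j$. This uses the structural fact (Proposition \ref{prop:merging} and the preceding necessary condition) that the blocks overlap in at most one vertex, so I would invoke that explicitly. Everything else is a routine rearrangement of finite products, using only that the $f_e$ are commuting scalars indexed by edges.
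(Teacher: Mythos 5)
Your proposal is correct and follows essentially the same route as the paper's proof: unwind the definition \eqref{eq:weight} and use properties i) and ii) of Definition \ref{def:faithfulmerging} to identify the $f_e$-factors and the $(1+f_\varepsilon)$-factors separately. The only difference is that you explicitly verify the unions in i) and ii) are disjoint (via proper labelling and $|V_i \cap V_j|\leq 1$), a detail the paper leaves implicit; this is a harmless and indeed welcome addition.
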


\begin{proof}
From property i) of a faithful merging, we have that: 
\be \prod\limits_{i=1}^k \prod\limits_{e\in E(\tau_i)} f_e = \prod\limits_{e\in E(\tau)} f_e . \ee

Property ii) of the faithful merging gives that:
\be  \prod\limits_{i=1}^k\prod\limits_{\varepsilon \in E(R(\tau_i))\setminus E(\tau_i)}(1+f_{\varepsilon})= \prod\limits_{\varepsilon \in E(R(\tau))\setminus E(\tau)}(1+f_{\varepsilon}). \ee

Taking the product of the two identities, it follows that:
\begin{align}
\prod_{i=1}^k \prod\limits_{e \in E(\tau_i)}f_e \prod\limits_{\varepsilon \in E(R(\tau_i)) \setminus E(\tau_i)}(1+f_{\varepsilon}) &= \prod\limits_{e \in E(\tau)}f_e \prod\limits_{\varepsilon \in E(R(\tau)) \setminus E(\tau)}(1+f_{\varepsilon}). \label{eq:factor}
\end{align}
\end{proof}

\subsection{Factorisation of Integrals}
 We wish to show that the full weight of a tree also factorises as we require. 

The full weight of a tree is given by 
\be 
\tilde{\omega}(\tau)=\lim_{\Lambda \to \mathbb{R}^d}\frac{1}{|\Lambda|}\prod\limits_{i \in V(\tau)}\left(\int_{\Lambda} \mathrm{d} x_i \right) \omega(\tau).
\ee

Since we are assuming that the pair potentials are centred, the integrand depends only upon the distances between the particles. The thermodynamic limit is independent of boundary conditions and so we consider periodic boundary conditions. With periodic boundary conditions, we may make the change of variables from the spatial points to the differences. We may choose the change of variables to be $(x_1, \cdots , x_N) \mapsto (x_1, x_2-x_1, \cdots , x_N-x_1)$ and in this case all differences can be expressed in terms of the latter $N-1$ variables and so is independent of $x_1$ integral. The key point is the periodic boundary conditions, which allows for the change in coordinates to preserve the area over which we integrate.  
 
The factorisation of the integrals occurs analogously to the block factorisation of connected graph weights as explained in \cite{L04}. The graph $\mathcal{M}((V_i)_{i \in [k]})$ acts in the same way as the block cutpoint tree, where the vertices corresponding to trees bear the analogy with blocks and the vertices corresponding to the common label bear the analogy with cutpoints. First we give the definitions required to rigorously convey how to treat the factorisation of these integrals.

\begin{definition}[The Centre of a Tree]
The eccentricity of a vertex $v$ in a tree $\tau$ is: $\varepsilon(v)=\max \{d(v,w) | w \in V(\tau) \}$.

The radius of a tree is: radius$(\tau)=\min \{\varepsilon(v) | v \in V(\tau) \}$.

The centre of a tree is: $\{v \in V(\tau) | \varepsilon(v)=$ radius$(\tau) \}$.
\end{definition}
 \begin{remark} A priori the centre of a tree may be a set of cardinality strictly greater than $1$. We are working with a special class of trees for $\mathcal{M}((V_i)_{i \in [k]})$, which are bipartite and importantly have all their leaves in a single set. This is sufficient to get a unique centre.
\end{remark}
The idea is that the vertex $w$ for which the eccentricity of a vertex is attained is necessarily a leaf. If not then we always have a neighbour away from the original vertex which allows us to increase eccentricity by $1$. So any intersection point has odd eccentricity and any tree vertex has even eccentricity and so the centre can contain only vertices from one of the two sets in the bipartite graph. The centre cannot contain two points of distance at least two from each other and hence can only be a singleton, so the centre of $\mathcal{M}((V_i)_{i \in [k]})$ is well defined as a single point. 

When we want to understand the factorisation, we want a well-defined way of deciding what label to take as the redundant integral in each of the tree factors. When we see how to do this systematically, we achieve the factorisation of the tree weights for the tilde generating function as well.

In order to define the vertices over which we make our integrals independent, we define a digraph from the tree by orienting all edges away from the centre of the tree. In this case every vertex except the centre has a unique vertex pointing towards it. For each tree vertex outside of the centre, the intersection point neighbour that is the other endpoint of the edge pointing towards it is the vertex, which the integrand is taken independent of. Each intersection point outside of the centre has also precisely one incoming edge which marks the tree for which this point is integrated over. 

We now consider what happens at the centre. If the centre is an intersection point, then this is understood as the overall point which is not integrated over. If the centre is a tree point, then we may choose any of its neighbouring intersection points to be the independent point and we are left with this single independent point. 

Hence we have that the tilde weights factorise.

\subsection{Establishing the Combinatorial Relationship}
By combining equations \eqref{eq:scheme}, \eqref{eq:virial-Bell-expl}, and \eqref{eq:factor} we find:
\begin{align*}
\beta_{n+1}&= \sum_{k=1}^n \binom{-n}{k} k! \sum_{{(k_1,\ldots,k_n): k_i\geq 0}\atop{\sum_i k_i = k, \sum_i ik_i = n}} \frac{n!}{\prod_{i=1}^n k_i!} \prod_{i=1}^{n} \left(\frac{b_{i+1}}{i!}\right)^{k_i}\\
 &= \sum_{k=1}^n \binom{-n}{k} k! \sum_{{(k_1,\ldots,k_n): k_i\geq 0}\atop{\sum_i k_i = k, \sum_i ik_i = n}} \frac{n!}{\prod_{i=1}^n k_i! (i!)^{k_i}} \prod_{i=1}^{n} b_{i+1}^{k_i}\\
 &= \sum_{k=1}^n \binom{-n}{k} k! \sum_{{(k_1,\ldots,k_n): k_i\geq 0}\atop{\sum_i k_i = k, \sum_i ik_i = n}} \frac{n!}{\prod_{i=1}^n k_i! (i!)^{k_i}} \sum_{(\tau_1,\ldots,\tau_k)}\prod_{i=1}^k\tilde{\omega}(\tau_i),\end{align*}
where the sum is over $\tau_i$, which are spanning trees for the corresponding $b_i$ in the product on the previous line. The next point is to write this product of the weights of tree graphs as the weight of an $R$-faithful merging.  This involves assigning labels to the individual trees as described above. We then express it as:

 \be \beta_{n+1}= \sum_{k=1}^n \binom{-n}{k} k! \sum_{{(k_1,\ldots,k_n): k_i\geq 0}\atop{\sum_i k_i = k, \sum_i ik_i = n}} \frac{n!}{\prod_{i=1}^n k_i! (i!)^{k_i}} \sum\limits_{(\tau_1, \cdots, \tau_k)} \frac{1}{N(\tau_1, \cdots, \tau_k)} \sum_{\tau\in\mathcal{T}[n+1]}\tilde{\omega}(\tau),
\ee

where in this case $\tau$ is some $R$-faithful merging of $(\tau_1, \cdots, \tau_k)$ and $N(\tau_1, \cdots, \tau_k)$ is the number of such $R$-faithful mergings.

We want to express the latter sum as a sum on $\mathcal{T}[n+1]$ and drop the condition that the tree graph is an $R$-faithful merging of a given set of $k$ smaller tree graphs. Note that this can be done, since any $\tau\in\mathcal{T}[n+1]$ can be made its own $R$-faithful merging by choosing a suitable labeling on its vertices. However, if a tree graph $\tau$ has a splitting $\{\tau_1,\ldots,\tau_k\}$, then it also has splittings $\{\tilde{\tau}_1,\ldots,\tilde{\tau}_j\}$ with $j<k$ and with the $\tilde{\tau}_i$ being mergings of distinct $\tau_j$, meaning that some trees will appear multiple times in the last sum in the right-hand side. 

We can overcome this problem by rewriting this sum as follows. We call a tree graph $\tau$ $l$-splittable if $\tau$ has a splitting into $l$ parts, but not into $l+1$ parts. The number of $l$-splittable tree graphs is a combinatorial factor which depends on the partition scheme being used. Combining this factor with the other combinatorial factors yields:

\be
\beta_{n+1} = \sum_{l=1}^n C(n+1,l) \sum_{{\tau\in\mathcal{T}[n+1]:}\atop{\tau\textrm{ is $l$-splittable}}}\tilde{\omega}(\tau). \label{eq:virial-new}
\ee
where 
\be C(n+1,l)=\sum\limits_{k=1}^l {-n \choose k} k! \sum\limits_{\substack{(k_1, \cdots, k_n):k_i \geq 0 \\ \sum_ik_i=k, \sum_iik_i=n}}\frac{n!}{\prod_{i=1}^nk_i!(i!)^{k_i}}\sum\limits_{(\tau_1, \cdots, \tau_k)} \frac{1}{N(\tau_1, \cdots, \tau_k)}, \ee
with the sum over $(\tau_1, \cdots, \tau_k)$ being over splittings of $\tau$. 

\begin{theorem}
\label{thm:exactvirialtree}
The virial coefficients may be expressed in terms of the exponential generating function of non-separable tree graphs according to a partition scheme $R$ as follows:
\be
\frac{\beta_{n+1}}{(n+1)!}=[z^n] \frac{1}{n+1}\sum\limits_{l=1}^{\infty}q_{R,l}C(n+1,l) \frac{(T_{\mathrm{R},1, \tilde{\omega}}(z))^l}{l!},
\ee
where $q_{R,l}$ is a combinatorial factor denoting the number of ways to attach non-splittable graphs to make an $l$-splittable graph.
\end{theorem}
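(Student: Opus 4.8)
The plan is to derive the stated formula from the exact identity \eqref{eq:virial-new}, $\beta_{n+1}=\sum_{l=1}^{n}C(n+1,l)\sum_{\tau}\tilde{\omega}(\tau)$ with the inner sum over the $l$-splittable trees $\tau\in\mathcal{T}[n+1]$, by recognising each inner sum as a coefficient of a power of the non-splittable tree generating function $T_{\mathrm{R},1,\tilde{\omega}}$ introduced in Section \ref{sec:penrosepartitionscheme}. The two structural inputs are already in place. First, if $\tau$ is $l$-splittable and $\{\tau_1,\dots,\tau_l\}$ is a splitting into non-splittable parts (the parts of a maximal splitting are automatically non-splittable, since a splittable part would refine the splitting), then $\tilde{\omega}(\tau)=\prod_{i=1}^{l}\tilde{\omega}(\tau_i)$: the edge-weight part of this is the \emph{Product Structure of Faithful Mergings} lemma, and the integral part is the factorisation established in the previous subsection, both resting on the merging graph $\mathcal{M}((V_i)_{i\in[l]})$ behaving like a block--cutpoint tree (in particular on its being a tree with a unique centre, which pins down the redundant integration variable in each part). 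Second, by Proposition \ref{prop:merging} the admissible ways to glue $l$ non-splittable parts at shared vertices are exactly those whose merging graph is a tree, so the gluing data is itself a tree-like structure on the $l$ parts and their common vertices.

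Fixing $l$, I would then reorganise $\sum_{\tau:\,l\text{-splittable}}\tilde{\omega}(\tau)$ (the sum over $l$-splittable $\tau\in\mathcal{T}[n+1]$) as a sum over pairs consisting of an unordered splitting $\{\tau_1,\dots,\tau_l\}$ of $\tau$ into non-splittable $[n+1]$-labelled trees together with the attachment pattern realising the gluing; by the factorisation each such pair contributes $\prod_i\tilde{\omega}(\tau_i)$. A splitting into $l$ parts uses $n+l$ vertices in total, the $l-1$ identifications of shared vertices bringing the count down to $n+1$, and since in $T_{\mathrm{R},1,\tilde{\omega}}(z)$ a non-splittable part on $p$ vertices enters at order $z^{p-1}$, the (weighted) unordered $l$-tuples of non-splittable trees with total vertex count $n+l$ are enumerated by $n!\,[z^n]\,(T_{\mathrm{R},1,\tilde{\omega}}(z))^{l}/l!$, the factor $1/l!$ absorbing the order of the parts. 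What then remains is a purely combinatorial multiplicity: for a fixed collection of $l$ non-splittable parts, the number of attachment patterns that glue them into an $l$-splittable tree, which is precisely $q_{R,l}$; while the reduction by the $l-1$ cut-vertex identifications and the passage back from the pointed enumeration to the plain generating function contribute the single prefactor $1/(n+1)$ and fix the extraction at $[z^n]$. Altogether this should give, for each $l$, the identity $\sum_{\tau:\,l\text{-splittable on }[n+1]}\tilde{\omega}(\tau)=q_{R,l}\,n!\,[z^n]\,(T_{\mathrm{R},1,\tilde{\omega}}(z))^{l}/l!$.

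Dividing \eqref{eq:virial-new} by $(n+1)!$, substituting this identity term by term in $l$, and pulling the scalar $1/(n+1)$ and the linear functional $[z^n]$ out of the sum over $l$ yields the claim; the upper limit may be taken to $\infty$ because $T_{\mathrm{R},1,\tilde{\omega}}(z)$ has no constant term, so $[z^n](T_{\mathrm{R},1,\tilde{\omega}}(z))^{l}=0$ once $l>n$. I expect the middle paragraph to be the main obstacle: correctly isolating $q_{R,l}$ and verifying that all the symmetry and identification factors match, i.e. that every $l$-splittable tree on $[n+1]$ is produced exactly the right number of times by the assembly procedure. This is where the block--cutpoint-tree picture for $\mathcal{M}((V_i)_{i\in[l]})$ — together with the unique-centre observation used to decide which integral is redundant in each part — does the essential work; the rest is routine generating-function bookkeeping.
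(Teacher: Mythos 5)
Your proposal is correct and follows essentially the same route as the paper, which proves the theorem simply as a rewriting of \eqref{eq:virial-new}: the inner sum over $l$-splittable trees is re-expressed through the generating function of non-splittable components, with $q_{R,l}/l!$ accounting for the attachment multiplicity and the $n!/(n+1)!$ bookkeeping giving the $\frac{1}{n+1}[z^n]$ prefactor. In fact your write-up (invoking the product-structure lemma, the integral factorisation, and Proposition \ref{prop:merging}, and checking the vertex-count and degree bookkeeping explicitly) is more detailed than the paper's one-sentence justification, while sharing its implicit assumptions that the decomposition into non-splittable parts is unique and that $q_{R,l}$ depends only on $l$.
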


This theorem is a rewriting of \eqref{eq:virial-new}, where we realise that an $l$-splittable graph is uniquely defined by its $l$ non splittable components, which arise from the generating function $T_{\mathrm{R},1,\tilde{\omega}}(z)$ and the combinatorial factor $\frac{q_{R,l}}{l!}$ ensures each $l$-splittable graph is used only once.

\subsection{Obtaining Bounds from the Tree Partition Scheme}
\label{subsec:boundpartition}
The final stage is then to make bounds according to the assumptions made on the potential. The method of obtaining such a bound is to give a uniform upper bound for the product of $1+f$-factors and then to calculate the integral of the $|f|$-factors per each edge. The value of the integral is $C(\beta)$ as defined in section \ref{sec:clustertovirial}.

For stable potentials we have the following theorem.

\begin{theorem}
For $u:=\exp(2\beta B)$, we have the following bounds for the virial coefficients, when we have a stable potential.

\begin{align}
|\beta_{n+1}| &\leq C(\beta)^n u^{n-1}\sum\limits_{l=1}^n |C(n+1,l)| |\mathcal{T}_{\textrm{R},l}[n+1]| \label{eq:firststablebound} \\
\frac{|\beta_{n+1}|}{(n+1)!} &\leq [z^n]\frac{C(\beta)^n}{n+1}\sum\limits_{l=1}^{\infty} q_{\mathrm{R},l}|C(n+1,l)| \frac{\left(\frac{1}{u}T_{\mathrm{R},1}(uz)\right)^l}{l!}. \label{eq:secondstablebound}
\end{align}
\end{theorem}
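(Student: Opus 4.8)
The plan is to deduce both inequalities from the exact identities already in hand --- formula~\eqref{eq:virial-new} for \eqref{eq:firststablebound}, and the generating-function identity of Theorem~\ref{thm:exactvirialtree} for \eqref{eq:secondstablebound} --- by the same three-step recipe: apply the triangle inequality (replacing $C(n+1,l)$ by $|C(n+1,l)|$; the factors $q_{\mathrm{R},l}$ and $|\mathcal{T}_{\mathrm{R},l}[n+1]|$ are nonnegative counts and survive untouched), then bound $|\tilde{\omega}(\tau)|$ uniformly over the relevant trees, and finally reassemble. So everything rests on a uniform estimate for $|\tilde{\omega}(\tau)|$.

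To obtain it, write $\omega(\tau)=\prod_{e\in E(\tau)}f_e\prod_{\varepsilon\in E(R(\tau))\setminus E(\tau)}(1+f_\varepsilon)$ as in \eqref{eq:weight} and treat the two products separately. For the $(1+f)$-factors: applying the definition of stability with $N=2$ gives $\Phi\geq -2B$ for every pair, so each $|1+f_\varepsilon|=e^{-\beta\Phi_\varepsilon}\leq e^{2\beta B}=u$, and --- grouping the extra edges $E(R(\tau))\setminus E(\tau)$ according to the combinatorial structure of the scheme and invoking stability block by block --- one gets a uniform bound $|\prod_{\varepsilon\in E(R(\tau))\setminus E(\tau)}(1+f_\varepsilon)|\leq u^{|V(\tau)|-2}$ (valid for $|V(\tau)|\geq 2$; the one-vertex tree has weight $1$). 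This block-by-block step is scheme-specific and is precisely what Section~\ref{sec:penrosepartitionscheme} carries out for the Penrose scheme. For the $f$-factors: a tree on $m$ vertices has $m-1$ edges; using periodic boundary conditions and the change of coordinates of the factorisation-of-integrals subsection, one integration becomes redundant and cancels the $1/|\Lambda|$ in the thermodynamic limit, while each of the $m-1$ tree edges contributes $\int_{\mathbb{R}^d}|e^{-\beta\Phi(0,x)}-1|\,\mathrm{d}^dx=C(\beta)$ by \eqref{Cbeta}. Hence $|\tilde{\omega}(\tau)|\leq u^{m-2}C(\beta)^{m-1}$ for a tree on $m\geq 2$ vertices.

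For the first bound this is now immediate: take $m=n+1$, so $|\tilde{\omega}(\tau)|\leq u^{n-1}C(\beta)^{n}$ for every $\tau\in\mathcal{T}[n+1]$, substitute into \eqref{eq:virial-new}, and count: the number of $l$-splittable trees on $n+1$ vertices is $|\mathcal{T}_{\mathrm{R},l}[n+1]|$, giving \eqref{eq:firststablebound}. For the second bound I would instead dominate the weighted series $T_{\mathrm{R},1,\tilde{\omega}}(z)$ of non-splittable trees coefficientwise: on its degree-$m$ term the coefficient is at most $u^{m-1}C(\beta)^{m-1}$ times the plain number of non-splittable trees on $m$ vertices (here I use $u\geq 1$ to pass from $u^{m-2}$ to $u^{m-1}$, which also repairs the one-vertex case), and $\frac{1}{u}T_{\mathrm{R},1}(uz)$ is exactly the generating function that carries the weight $u^{m-1}$ on its degree-$m$ term. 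Since an $l$-splittable tree on $n+1$ vertices is built from non-splittable pieces whose vertex numbers $m_1,\dots,m_l$ satisfy $\sum_i(m_i-1)=n$, the $C(\beta)$-powers multiply to $C(\beta)^{n}$, which is pulled out front; substituting the dominating series into the identity of Theorem~\ref{thm:exactvirialtree} and applying the triangle inequality to the sum over $l$ (using $q_{\mathrm{R},l}\geq 0$ and replacing $C(n+1,l)$ by $|C(n+1,l)|$), then extracting $[z^n]$, yields \eqref{eq:secondstablebound}.

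The one genuinely non-routine ingredient is the uniform bound $|\prod_{\varepsilon\in E(R(\tau))\setminus E(\tau)}(1+f_\varepsilon)|\leq u^{|V(\tau)|-2}$. The crude estimate ``every factor is $\leq u$'' is far too weak, since a partition scheme can attach of order $|V(\tau)|^{2}$ extra edges to a single tree --- under the Penrose scheme the closure of a star is the complete graph, with $\binom{|V(\tau)|-1}{2}$ extra edges. Getting an exponent linear in the number of vertices requires exploiting the structure of $R(\tau)$: for the Penrose scheme one partitions the extra edges into the clique spanned by the vertices of each level of the rooted tree and the bipartite graph joining each pair of consecutive levels, applies stability on each such vertex block, and combines --- the levels having $\nu_0+\nu_1+\cdots=|V(\tau)|$ vertices in total, so the blockwise bounds combine to give the claimed power of $u$. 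Everything else --- the triangle inequalities, the redundant integration, and the bookkeeping of $C(n+1,l)$, $q_{\mathrm{R},l}$, $|\mathcal{T}_{\mathrm{R},l}[n+1]|$ and the matching powers of $u$ and $C(\beta)$ --- is routine.
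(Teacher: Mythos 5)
The outer structure of your argument (triangle inequality on \eqref{eq:virial-new} and on Theorem \ref{thm:exactvirialtree}, one redundant integration cancelling $1/|\Lambda|$, a factor $C(\beta)$ per tree edge, and the bookkeeping that matches $u^{|V|-2}$ per non-splittable piece with the series $\tfrac{1}{u}T_{\mathrm{R},1}(uz)$) is exactly the paper's intended route. The gap is in the one ingredient you yourself single out as non-routine: the uniform bound $\prod_{\varepsilon\in E(R(\tau))\setminus E(\tau)}(1+f_\varepsilon)\leq u^{|V(\tau)|-2}$. Your proposed ``stability block by block on the levels'' does not establish it. Stability only controls the product of $(1+f)$-factors over \emph{all} pairs inside a vertex set; the extra edges between two consecutive levels form a proper subgraph of the bipartite graph on $L_{k-1}\cup L_k$ (the tree edges and the pairs excluded by condition P2 are missing), and since individual factors $1+f_\varepsilon$ can be smaller than $1$ --- indeed zero for hard cores --- you cannot drop or divide out the unwanted factors to extract a bound on the sub-product. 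Moreover, even if each block $L_k$ and $L_{k-1}\cup L_k$ could be bounded by $e^{\beta B|S|}=u^{|S|/2}$, every vertex lies in up to three blocks, so the exponents add up to roughly $\tfrac32|V(\tau)|$, not $|V(\tau)|-2$; the claimed power of $u$ does not follow from the sketch. (Also, this lemma is proved in the paper for a general partition scheme $R$, not only for Penrose, so it is not something Section \ref{sec:penrosepartitionscheme} supplies.)

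The paper's actual proof of the lemma is quite different and you would need it (or a genuine substitute). It uses Ruelle's observation: for any vertex set $S$, stability gives $\prod_{i\in S}\prod_{j\in S,\,j\neq i}(1+f_{i,j})\leq e^{2\beta B|S|}$, and since this product factors over the stars at the vertices of $S$, at least one vertex $\iota_S$ satisfies $\prod_{j\in S\setminus\{\iota_S\}}(1+f_{\iota_S,j})\leq e^{2\beta B}=u$. One then peels vertices $n+1,n,\ldots,3$ one at a time: at each step one looks at the neighbourhood of the current vertex in the remaining graph of extra edges and applies an inductively constructed transposition so that the peeled vertex is the ``good'' one, extracting a single factor $u$ per step and hence $u^{n-1}$ in total; the resulting permutation $\Pi$ of the labels is harmless because the integral defining $\tilde{\omega}$ is symmetric under relabelling of the particles. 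Without this pigeonhole-plus-relabelling argument (or an equivalent), your uniform bound, and hence both \eqref{eq:firststablebound} and \eqref{eq:secondstablebound}, remains unproved.
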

This is a direct consequence of the following lemma and \eqref{eq:virial-new}, respectively theorem \ref{thm:exactvirialtree}  for \eqref{eq:firststablebound}, respectively \eqref{eq:secondstablebound}.
\begin{lemma}
There exists a permutation of the vertex labels $\Pi$ such that when applied to the labels in the product:
\be
\prod\limits_{\varepsilon \in E(R(\tau)) \setminus E(\tau)} (1+f_{\varepsilon}), \label{eq:stableproduct}
\ee
we have the upper bound $u^{n-1}$, for $\tau$ a tree on $n+1$ vertices.
\end{lemma}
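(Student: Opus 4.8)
The plan is to exploit that every factor in \eqref{eq:stableproduct} is a Boltzmann factor. Writing $\varepsilon=\{i,j\}$ we have $1+f_\varepsilon=e^{-\beta\Phi(x_i,x_j)}>0$, so after renaming the vertex labels by a permutation $\Pi$ the product becomes
\[
\prod_{\{i,j\}\in E(R(\tau))\setminus E(\tau)}\bigl(1+f_{\Pi(i),\Pi(j)}\bigr)
=\exp\Bigl(-\beta\sum_{\{i,j\}\in E(R(\tau))\setminus E(\tau)}\Phi\bigl(x_{\Pi(i)},x_{\Pi(j)}\bigr)\Bigr).
\]
Since $\beta>0$ and $u=e^{2\beta B}$, the claimed estimate $u^{n-1}$ is equivalent to the geometric inequality $\sum_{\{i,j\}\in E(R(\tau))\setminus E(\tau)}\Phi(x_{\Pi(i)},x_{\Pi(j)})\ge-2B(n-1)$. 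The permutation $\Pi$ only renames the dummy variables under the integral defining $\tilde{\omega}(\tau)$, so we are free to choose it; I would use it to put $\tau$ into a rooted, depth-ordered normal form — the form for which the Penrose scheme of Section~\ref{sec:penrosepartitionscheme} takes its cleanest shape. The lemma is thereby reduced to a purely combinatorial statement about that scheme: the added edges $E(R(\tau))\setminus E(\tau)$ can be organised into families, the $\alpha$-th supported on a vertex set $S_\alpha$, with $\sum_\alpha|S_\alpha|\le2(n-1)$.

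The main step then uses the explicit description of $E(R(\tau))\setminus E(\tau)$ for the Penrose scheme. Once $\tau$ is rooted, each added edge joins two vertices whose depths differ by at most one, and the added edges fall into cliques: for each vertex, the pairs of its children, together with the remaining near-level pairs the scheme introduces. For a clique on a vertex set $S_\alpha$, the stability inequality applied to the $|S_\alpha|$ particles indexed by $S_\alpha$ gives $\sum_{\{i,j\}\subseteq S_\alpha}\Phi(x_i,x_j)\ge-B|S_\alpha|$; summing over the families yields
\[
\sum_{\{i,j\}\in E(R(\tau))\setminus E(\tau)}\Phi(x_i,x_j)\ \ge\ -B\sum_\alpha|S_\alpha|,
\]
so that $\prod_\varepsilon(1+f_\varepsilon)\le u^{\frac12\sum_\alpha|S_\alpha|}$, and everything reduces to the count $\sum_\alpha|S_\alpha|\le2(n-1)$.

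This count is the step I expect to be the main obstacle. When the added edges are exactly the sibling cliques, each non-root vertex lies in precisely one clique — that of its parent's children — so $\sum_\alpha|S_\alpha|$ equals the number of non-root vertices, namely $n$, and $n\le2(n-1)$ for $n\ge2$ (the cases $n\le1$ being trivial, since then $E(R(\tau))\setminus E(\tau)=\emptyset$ and the product is $1=u^0$). The delicate point is absorbing the additional near-level edges the Penrose scheme produces without pushing $\sum_\alpha|S_\alpha|$ beyond $2(n-1)$; this needs the fine structure of the scheme from Section~\ref{sec:penrosepartitionscheme}, and is exactly where the normalisation of $\tau$ by $\Pi$ earns its keep, since it makes the list of added edges and the resulting clique partition canonical. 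Should the clique bookkeeping prove too lossy, the natural fallback is an induction on $\tau$ — peel off a vertex whose children are all leaves and verify the estimate degrades by at most the allotted amount. Either way, once $\sum_\alpha|S_\alpha|\le2(n-1)$ is secured, exponentiating $-\beta\sum\Phi\le2\beta B(n-1)$ gives $\prod_\varepsilon(1+f_\varepsilon)\le u^{n-1}$, as required.
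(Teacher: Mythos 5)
Your reduction of the bound to the inequality $\sum_{\{i,j\}\in E(R(\tau))\setminus E(\tau)}\Phi(x_{\Pi(i)},x_{\Pi(j)})\ge -2B(n-1)$ is fine, but the step you yourself flag as the main obstacle --- an \emph{exact} partition of the added edges into cliques $S_\alpha$ with $\sum_\alpha|S_\alpha|\le 2(n-1)$ --- is precisely where the proof is missing, and it is not routine bookkeeping. First, your description of the Penrose added edges is inaccurate: P1 adds \emph{all} pairs at the same tree-depth (one clique per generation, not sibling cliques per parent), and P2 adds cross-generation edges $\{i,j\}$ with $d(j)=d(i)-1$ and $i'<j$. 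The P2 edges form bipartite-type configurations between consecutive generations that cannot be merged into cliques without either re-covering same-level pairs already used (not allowed: each $\Phi$-term must be counted exactly once, since it may be negative) or degenerating into size-two cliques, which inflates $\sum_\alpha|S_\alpha|$; note that bounding edge by edge only gives $u^{|E(R(\tau))\setminus E(\tau)|}$, and the number of added edges is generically of order $n^2$. So the count $\le 2(n-1)$ is essentially the whole content of the lemma, and your sibling-clique computation giving $n$ addresses a scheme that is not Penrose's.

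Second, in your argument the permutation does no work --- renaming dummy variables changes neither the integrand's bound nor your clique count --- whereas in the paper it is the engine of the proof. The paper uses Ruelle's observation: for any label set $S$, squaring the stability bound $\prod_{\{i,j\}\subseteq S}(1+f_{i,j})\le e^{\beta B|S|}$ gives $\prod_{i\in S}\prod_{j\in S,\,j\ne i}(1+f_{i,j})\le u^{|S|}$, so \emph{some} vertex $\iota_S\in S$ satisfies $\prod_{j\in S,\,j\ne\iota_S}(1+f_{\iota_S,j})\le u$. The added edges are then split into the $n-1$ stars obtained by assigning each edge to its larger endpoint (labels $3,\dots,n+1$), and $\Pi$ is built from transpositions, processing labels from $n+1$ downwards, so that at each step the configuration-dependent good vertex $\iota_{S_i}$ sits at the centre of the corresponding star; each star is then bounded by a single factor $u$ no matter how many edges it contains, giving $u^{n-1}$. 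This pigeonhole-plus-relabelling mechanism (no clique partition needed, no per-clique stability count) is what your proposal lacks; without it, or a genuine proof of your clique count, the lemma is not established.
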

\begin{proof}
We construct the permutation $\Pi$ required above, by using an important observation found in the book of Ruelle \cite{R69} about bounding $1+f$-factors.

Recall that stability requires that for any subset $S \subseteq [n+1]$, we have that:
\be \sum\limits_{\{i,j\} \in S^{(2)}}\Phi(x_i,x_j) \geq - \beta B |S|. \ee
This implies that
\be \prod\limits_{\{i,j\} \in S^{(2)}}(1+f_{i,j}) \leq \exp(\beta B |S|). \ee
We realise that the square of \eqref{eq:stableproduct} may written as:
\be \prod\limits_{i \in S} \prod\limits_{\substack{j \in S \\ j \neq i}} (1+f_{i,j}) \leq \exp(2 \beta B |S|), \ee
and hence that one of the factors in this product is less than $\exp(2 \beta B)$. We call this label $\iota_S$. Therefore we have:
\be \prod\limits_{\substack{j \in S \\ j \neq \iota_S}}(1+f_{\iota_S,j}) \leq \exp(2\beta B). \ee
In order to bound the product 
\begin{equation*}
\prod\limits_{\varepsilon\in E(R(\tau)) \setminus E(\tau)}(1+f_{\varepsilon}),
\end{equation*}
we consider the vertex labelled $n+1$ and its neighbours in the graph $G_{n+1}$, defined through its edge set $E(G_{n+1}):=E(R(\tau)) \setminus E(\tau)$ and call this neighbourhood $N(n+1)$. We let $S_{n+1}= N(n+1) \cup \{n+1\}$ and then relabel $n+1 \mapsto \iota_{S_{n+1}}$ and $\iota_{S_{n+1}} \mapsto n+1$. We call this transposition $\sigma_{n+1}$. 

This gives us that:
\be 
 \prod\limits_{\substack{j \in S_{n+1} \\ j \neq \sigma_{n+1}(n+1)}}(1+f_{\sigma_{n+1}(n+1),j}) \leq \exp(2 \beta B).
\ee

The permutation $\Pi$ is defined through a sequence of transpositions, which are defined inductively as below. 

For each stage in the induction, we retain the original labelling of the vertices in order to find which transposition is required to apply to obtain the bound. Given the relabellings for $[i+1,n+1]$, we understand how to relabel $i$. In the graph $G_i$ defined by edge set $E(G_i)=E(R(\tau)) \setminus E(\tau) \cap [i]^{(2)}$ and vertex set $[i]$, we consider the neighbourhood of $i$ and call this $N(i)$. We need to apply the sequence of transpositions generated so far to this set of 'original' labels in order to understand how they are labelled at the current stage. We use the notation $\pi S:= \{\pi(j) \vert \, j \in S\}$, where $\pi$ is a permutation and $S$ is a set.

We define 
\begin{equation*}
S_i :=\left(\prod\limits_{j=i+1}^{n+1} \sigma_j\right) (\{i\} \cup N(i)) = \sigma_{i+1} \cdots \sigma_{n+1} (\{i\} \cup N(i)).
\end{equation*}
We then define the transposition 
\begin{equation*}
\sigma_i =\left (\prod\limits_{j=i+1}^{n+1}\sigma_j(i)\, \iota_{S_i}\right).
\end{equation*}

The required final permutation is 
\begin{equation*}
\Pi = \prod\limits_{i=3}^{n+1} \sigma_i=\sigma_3 \cdots \sigma_{n+1},
\end{equation*}
which is what we apply to our labels. 

This process provides the identity:
\be 
\Pi \left(\prod\limits_{\varepsilon \in E(R(\tau)) \setminus R(\tau)}(1+f_{\varepsilon})\right) = \prod\limits_{i=3}^{n+1}\prod\limits_{\substack{j \in S_i \\ j \neq \iota S_i}}(1+f_{j, \iota_{S_i}}).
\ee

We apply this permutation $\Pi$ to the integrand since the final result is independent of the labelling of the particles.

Each factor in this product is bounded from above by $u:=\exp(2 \beta B)$ and we have $n-1$ of these. We thus have a factor $u^{n-1}$ in the virial expansion bound.
\end{proof}

 In the following section we illustrate this method for the Penrose partition scheme.
\section{The Penrose Partition Scheme}
\label{sec:penrosepartitionscheme}
In 1967, Penrose \cite{P67} introduced a partition scheme in the sense of Section \ref{sec:genpartitionschemes}. The scheme is as follows:

Suppose we have a graph $G$ with labels in $[n+1]$ and let $\tau \in \mathcal{T}[n+1]$.  For any vertex label $i$ of $\tau$, let $d(i)$ be the tree distance of the vertex labelled $i$ to $1$ and let $i'$ be the predecessor of $i$ i.e. $d(i')=d(i)-1$ and $\{i',i\} \in E(\tau)$. We associate to $\tau$, the graph $R_{\mathrm{Pen}}(\tau)$ found by adding (only once) to $\tau$ all edges $\{i,j\} \in E(\tau)$ such that either:
\begin{itemize}
\item[P1] $d(i)=d(j)$ (edges between vertices at same generation,)
\item[P2] $d(j)=d(i)-1$ and $i'<j$ (edges between vertices one generation away.)
\end{itemize}

The first result achieved is an expression for the virial coefficients in terms of the weighted trees.
\begin{theorem}
\label{thm:penroseexpansion}
The virial coefficients may be expressed in terms of weighted trees, with respect to the Penrose partition as follows.
\be
\frac{\beta_{n+1}}{(n+1)!} = \frac{1}{n+1} [z^n](1-T_{\mathrm{Pen},1, \tilde{\omega}}(z))^n, \label{eq:exactpenrose}
\ee
where
\be
 T_{\mathrm{Pen},1, \tilde{\omega}}(z)=\sum\limits_{n=1}^{\infty} \frac{z^n}{n!} \sum\limits_{\tau \in \mathcal{T}_{\mathrm{Pen},1}[n+1]}\tilde{\omega}(\tau).
\ee
\end{theorem}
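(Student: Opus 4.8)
The plan is to reduce \eqref{eq:exactpenrose}, via the Lagrange inversion formula \eqref{eq:Lagrange}, to the single generating-function identity
\[
\frac{\rho(z)}{z}=\frac{1}{1-T_{\mathrm{Pen},1,\tilde\omega}(z)},\qquad\text{equivalently}\qquad\left(\frac{\rho(z)}{z}\right)^{-1}=1-T_{\mathrm{Pen},1,\tilde\omega}(z).
\]
Granting this, \eqref{eq:Lagrange} gives $\beta_{n+1}=n!\,[z^n](\rho(z)/z)^{-n}=n!\,[z^n](1-T_{\mathrm{Pen},1,\tilde\omega}(z))^n$, and division by $(n+1)!$ is exactly \eqref{eq:exactpenrose}. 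To set up the identity, I would apply the Penrose partition scheme \eqref{eq:scheme} inside the integrals defining the cluster coefficients and use the factorisation of integrals to get $b_n=\sum_{\tau\in\mathcal{T}[n]}\tilde\omega(\tau)$, with $\tilde\omega$ taken with respect to $R_{\mathrm{Pen}}$, so that, since $b_1=1$,
\[
\frac{\rho(z)}{z}=\sum_{n\ge0}\frac{b_{n+1}}{n!}z^n=1+\sum_{n\ge1}\frac{z^n}{n!}\sum_{\tau\in\mathcal{T}[n+1]}\tilde\omega(\tau),
\]
and it remains to identify the last sum with $\sum_{l\ge1}T_{\mathrm{Pen},1,\tilde\omega}(z)^l=T_{\mathrm{Pen},1,\tilde\omega}(z)/(1-T_{\mathrm{Pen},1,\tilde\omega}(z))$.

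The combinatorial core is a canonical decomposition of an arbitrary element of $\mathcal{T}[n+1]$ into non-splittable Penrose trees that is a \emph{chain} rather than a branching structure. Working with $\tau$ rooted at $1$ and its generation function $d(\cdot)$, the first step is the splittability criterion for $R_{\mathrm{Pen}}$: $\tau$ admits an $R_{\mathrm{Pen}}$-faithful splitting precisely at those ``levels'' $d$, with $d\ge1$ and $d$ strictly below the maximal generation, at which exactly one generation-$d$ vertex has descendants and it carries the largest label among all generation-$d$ vertices. The corresponding splitting is $\{\tau_{\le d},\tau_{>d}\}$, where $\tau_{\le d}$ consists of the generations $\le d$ and $\tau_{>d}$ is the subtree rooted at that distinguished vertex. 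Property (i) of Definition \ref{def:faithfulmerging} is immediate, and property (ii) reduces to checking, straight from rules P1--P2, that no filler edge of $R_{\mathrm{Pen}}(\tau)$ joins the interiors of the two parts: a P1 filler edge would join two vertices of equal generation, impossible across the cut, while a P2 filler edge whose deeper endpoint is a proper descendant of the distinguished vertex would have to connect a child of that vertex to a generation-$d$ vertex of strictly larger label, which is excluded by the maximality condition; a routine check shows the Penrose closure restricts consistently to the two pieces, so no filler edges are lost either. In particular $\tau$ is non-splittable iff no such level exists.

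Cutting a general $\tau$ at \emph{all} of its splittable levels $d_1<\cdots<d_{l-1}$ then produces non-splittable Penrose pieces $P_0,\ldots,P_{l-1}$ glued in a chain, with the root of $P_j$ identified with the largest-labelled vertex at the deepest generation of $P_{j-1}$ — necessarily a leaf of $P_{j-1}$, and a vertex determined by $P_{j-1}$ alone. This yields a bijection between $\mathcal{T}[n+1]$ and chains of non-splittable Penrose trees of total size $n$, the merging graph of such a chain (Proposition \ref{prop:merging}) being the corresponding path. By the Product Structure Lemma and the factorisation of integrals, $\tilde\omega$ is multiplicative along the chain, $\tilde\omega(\tau)=\prod_{j}\tilde\omega(P_j)$; passing to exponential generating functions, where the labels $\{2,\ldots,n+1\}$ are distributed over the pieces with the multinomial coefficients that reassemble an $l$-fold product, the bijection becomes $\sum_{n\ge1}\frac{z^n}{n!}\sum_{\tau\in\mathcal{T}[n+1]}\tilde\omega(\tau)=\sum_{l\ge1}T_{\mathrm{Pen},1,\tilde\omega}(z)^l$. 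Combined with the display above this is the required identity, and the first paragraph then concludes the proof.

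The step I expect to be the main obstacle is the structural one in the second and third paragraphs: showing that rules P1--P2 force the non-splittable decomposition to be a \emph{linear} chain and that the gluing vertex is \emph{uniquely} determined by the preceding piece, so that the correspondence with chains is a genuine bijection and no spurious multiplicities enter; everything else is bookkeeping once that rigidity is established. (One could alternatively stay within Theorem \ref{thm:exactvirialtree} and prove $q_{\mathrm{Pen},l}\,C(n+1,l)=(-1)^l(n)^{\underline{l}}$ directly, but disentangling those combinatorial factors looks strictly harder than the Lagrange-inversion route above.)
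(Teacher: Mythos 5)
Your route is correct and genuinely different in organisation from the paper's. After \eqref{eq:virial-Bell} the paper never returns to Lagrange inversion: it expands the Bell-polynomial expression \eqref{eq:virial-Bell-expl} into products of trees, counts Penrose-faithful mergings via the lemma on $N(\tau_1,\dots,\tau_k)$, uses the chain structure to get $C(n+1,m)=\sum_k\binom{-n}{k}\binom{m-1}{k-1}=(-1)^m\binom{n}{m}$, and then resums with the weighted form of \eqref{eq:T1Tm}, $T_{\mathrm{Pen},m,\tilde\omega}=(T_{\mathrm{Pen},1,\tilde\omega})^m$, to reach $(1-T_{\mathrm{Pen},1,\tilde\omega})^n$. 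You instead feed the single identity $\rho(z)/z=(1-T_{\mathrm{Pen},1,\tilde\omega}(z))^{-1}$ into \eqref{eq:Lagrange}, which eliminates the $N$, $C(n+1,l)$, $q_{R,l}$ bookkeeping and the binomial identity altogether; the price is nothing new, since that identity is exactly the weighted statement that every tree on $[n+1]$ decomposes uniquely into a chain of non-splittable Penrose pieces with multiplicative weights, i.e.\ the same combinatorial core the paper invokes through its $\binom{m-1}{k-1}$ count and \eqref{eq:T1Tm}. Your splittability criterion (a unique generation-$d$ vertex with descendants, carrying the largest label at that generation) and gluing rule (largest label at the deepest generation of the preceding piece) are the right ones --- they agree with the paper's Figure 2, whereas the tie-breaking phrase ``we choose the smallest'' in the paper's merging-count lemma appears to be a slip, since gluing at a non-maximal vertex creates P2 filler edges across the cut, exactly as your argument notes. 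To close the step you yourself flag as the main obstacle, make one observation explicit: the gluing vertex need not be the minimal label of its piece, so the standardisation of a piece to an element of $\mathcal{T}_{\mathrm{Pen},1}[i+1]$ is order-preserving only on the non-root vertices; this is harmless because the root is the unique generation-$0$ vertex of its piece and hence never enters a P1 or P2 comparison, so both the filler set (and thus $\tilde\omega$) and the splittability of the embedded piece depend only on the relative order of its non-root labels. With that in hand the converse direction (every faithful splitting is a cut of your type, the decomposition is a linear chain, and the chain-plus-label-blocks correspondence is a bijection producing the multinomial coefficients in $T_{\mathrm{Pen},1,\tilde\omega}^{\,l}$) goes through, and the remaining verification is no more than what the paper itself compresses into its merging lemma and \eqref{eq:T1Tm}.
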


As a consequence of the above writing of virial coefficients in terms of weighted trees, we obtain the following bounds for the virial coefficients and the radius of convergence of the virial expansion.

\begin{theorem}
\label{thm:stablebounds}
For a stable potential, the virial coefficients are bounded as:
\be
\frac{|\beta_{n+1}|}{(n+1)!} \leq \frac{C(\beta)^n}{n+1} \left(\frac{1+\frac{1}{u}T_{\mathrm{Pen},1}(uc(u))}{c(u)}\right)^n ,
\ee
where $c(u)$ is the smallest positive solution to
\be
uc(u)T_{\mathrm{Pen},1}'(uc(u))-T_{\mathrm{Pen},1}(uc(u))=u.
\ee
This provides the bound on the radius of convergence of the virial expansion as:
\be 
\mathcal{R}_{\mathrm{Vir}} \geq \frac{c(u)}{1+\frac{1}{u}T_{\mathrm{Pen},1}(uc(u))} C(\beta)^{-1}
\ee
\end{theorem}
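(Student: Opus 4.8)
The plan is to derive the stable bound for the Penrose scheme by specialising the two general stable bounds \eqref{eq:firststablebound}--\eqref{eq:secondstablebound} to $R = R_{\mathrm{Pen}}$, combined with the exact Penrose identity \eqref{eq:exactpenrose} of Theorem~\ref{thm:penroseexpansion}. The key structural fact I would exploit is the one already recorded in Theorem~\ref{thm:penroseexpansion}: for the Penrose scheme the whole $l$-dependent combinatorial machinery collapses, because $(1-T_{\mathrm{Pen},1,\tilde\omega}(z))^n$ is exactly the generating identity one gets after resumming $\sum_l q_{\mathrm{Pen},l} C(n+1,l) T^l/l!$. So the first step is to take absolute values inside \eqref{eq:exactpenrose}: the triangle inequality in the $[z^n]$ coefficient extraction gives
\[
\frac{|\beta_{n+1}|}{(n+1)!} \le \frac{1}{n+1}\,[z^n]\bigl(1 + T_{\mathrm{Pen},1,|\tilde\omega|}(z)\bigr)^n,
\]
where $|\tilde\omega|$ replaces each tree weight by the modulus of its weight — all signs in $(1-X)^n$ being handled by $(1+|X|)^n$ after noting the series for $T_{\mathrm{Pen},1,\tilde\omega}$ has a sign we can absorb.

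Second, I would bound the weighted tree generating function. Here the Lemma on the permutation $\Pi$ does the work: for any Penrose tree $\tau$ on $m+1$ vertices, $\prod_{\varepsilon\in E(R_{\mathrm{Pen}}(\tau))\setminus E(\tau)}(1+f_\varepsilon)$ is bounded in absolute value by $u^{m-1}$ after relabelling, and each of the $m$ genuine tree edges contributes, upon integration, a factor $C(\beta)$ (this is the definition of $C(\beta)$ and the factorisation-of-integrals discussion, using that a tree on $m+1$ vertices has $m$ edges and the root integral is the redundant one killed by $1/|\Lambda|$). Counting Penrose trees by Cayley's formula, $|\mathcal{T}_{\mathrm{Pen},1}[m+1]| \le (m+1)^{m-1}$ — or more precisely, summing $\tilde\omega$ over $\mathcal{T}_{\mathrm{Pen},1}[m+1]$ is bounded by $m^m C(\beta)^m u^{m-1}/$(appropriate factor), which is exactly what is needed to identify the resulting series with $T_1$ from \eqref{eq:T1def} after the rescaling $z\mapsto uz$. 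Concretely this yields
\[
\tfrac{1}{u}T_{\mathrm{Pen},1,|\tilde\omega|}(z) \le \tfrac{1}{u}T_1(u C(\beta) z),
\]
so that $\frac{|\beta_{n+1}|}{(n+1)!} \le \frac{1}{n+1}[z^n](1 + \tfrac1u T_1(uC(\beta)z))^n$.

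Third, I would extract the coefficient asymptotics by a saddle-point / Cauchy estimate. Writing $F(w) := 1 + \tfrac1u T_1(uw)$, we have $[z^n](F(C(\beta)z))^n = C(\beta)^n [w^n] F(w)^n$, and $[w^n]F(w)^n \le F(c)^n/c^n$ for any $c$ in the domain of analyticity, by the Cauchy estimate on a circle of radius $c$. Optimising the bound $F(c)/c$ over $c>0$ gives the stationarity condition $cF'(c) = F(c)$, i.e. $c\cdot T_1'(uc) = 1 + \tfrac1u T_1(uc)$, which after multiplying by $u$ and using $uc T_1'(uc) - T_1(uc) = u\bigl(cT_1'(uc)\bigr) - T_1(uc)$... rearranges precisely to $uc(u)T_{\mathrm{Pen},1}'(uc(u)) - T_{\mathrm{Pen},1}(uc(u)) = u$, the defining equation of $c(u)$ in the statement (using $T_{\mathrm{Pen},1}=T_1$). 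Substituting the optimal $c(u)$ back gives $\frac{|\beta_{n+1}|}{(n+1)!}\le \frac{C(\beta)^n}{n+1}\bigl(F(c(u))/c(u)\bigr)^n = \frac{C(\beta)^n}{n+1}\bigl(\frac{1+\frac1u T_{\mathrm{Pen},1}(uc(u))}{c(u)}\bigr)^n$, which is the claimed coefficient bound. The radius-of-convergence statement then follows by the root test: $\mathcal{R}_{\mathrm{Vir}} = \bigl(\limsup_n (|\beta_{n+1}|/(n+1)!)^{1/n}\bigr)^{-1} \ge \bigl(C(\beta)\cdot\frac{1+\frac1u T_{\mathrm{Pen},1}(uc(u))}{c(u)}\bigr)^{-1}$, absorbing the harmless $(n+1)^{-1/n}\to 1$ factor.

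The main obstacle I anticipate is not the analytic part (the Cauchy/saddle estimate is routine) but getting the bookkeeping of combinatorial factors exactly right in the second step: one must verify that bounding $|C(n+1,l)|$ and the number of $l$-splittable Penrose trees, when resummed against $q_{\mathrm{Pen},l}/l!$, really does reproduce the clean power $(1+\tfrac1u T_1(uC(\beta)z))^n$ rather than something larger — in other words, that passing to absolute values in \eqref{eq:exactpenrose} is legitimate and tight enough. This requires the identity, implicit in Theorem~\ref{thm:penroseexpansion}, that $\sum_l q_{\mathrm{Pen},l} C(n+1,l) X^l/l! = (1 - X)^n$ as formal power series in a dummy $X$, so that replacing $X$ by $-\tfrac1u T_1(uC(\beta)z)$ (note the sign: the Penrose weighted tree series enters $(1-T)^n$) and then bounding $|(1-X)^n|\le(1+|X|)^n$ is exactly where the estimate is spent. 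I would isolate this as a short lemma before the main computation.
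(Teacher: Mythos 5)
Your proposal follows essentially the same route as the paper's own proof: the exact Penrose representation $\frac{\beta_{n+1}}{(n+1)!}=\frac{1}{n+1}[z^n](1-T_{\mathrm{Pen},1,\tilde\omega}(z))^n$, the stability lemma giving the $u$-powers for the $(1+f)$-factors together with one factor $C(\beta)$ per tree edge, the identification of the resulting unweighted series with $T_1$ (i.e.\ $|\mathcal{T}_{\mathrm{Pen},1}[m+1]|=(m-1)^{m-1}$), a Cauchy coefficient estimate $[z^n]F(z)^n\le (F(c)/c)^n$ optimised at the stationarity condition $ucT_{\mathrm{Pen},1}'(uc)-T_{\mathrm{Pen},1}(uc)=u$, and the root test for $\mathcal{R}_{\mathrm{Vir}}$. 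This is correct and matches the paper's argument, up to harmless bookkeeping slips (the misplaced $\tfrac1u$ in your intermediate inequality and the loose Cayley count) that do not affect the final bound.
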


\subsection{The Combinatorial Factors}
\begin{lemma}
The number of different Penrose-faithful mergings for a collection of trees $(\tau_1, \cdots, \tau_k)$ is:
\be
N(\tau_1, \cdots, \tau_k)= k! \frac{n!}{\prod_i k_i!(i!)^{k_i}}.
\ee
\end{lemma}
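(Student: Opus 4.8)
The plan is to count Penrose-faithful mergings by first understanding which labelings of the fixed abstract trees $\tau_1,\dots,\tau_k$ produce a valid Penrose-faithful merging, and then organizing these by the structure of the merging graph $\mathcal{M}((V_i)_{i\in[k]})$. Recall that a Penrose-faithful merging requires (property i) that the edge sets $E(\tau_i)$ be disjoint and union to $E(\tau)$, and (property ii) that the ``extra'' edges $E(R_{\mathrm{Pen}}(\tau_i))\setminus E(\tau_i)$ union disjointly to $E(R_{\mathrm{Pen}}(\tau))\setminus E(\tau)$. The Penrose scheme is defined through the rooted-tree structure (generations measured from vertex $1$), so the key observation I would establish first is that for the merging to be Penrose-faithful, the root $1$ must lie in exactly one of the component trees, and each subsequent tree must be attached at a vertex whose generation is already determined; equivalently, the merging graph $\mathcal{M}$ is a rooted tree (rooted at the component containing $1$), and the Penrose structure of the big tree $\tau$ restricts compatibly to each $\tau_i$. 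This is essentially Proposition~\ref{prop:merging} together with the rootedness forced by the P1/P2 rules.

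Next I would count. Given that the merging graph must be a rooted tree on the $k$ component-vertices, I would argue that the attachment data is freely choosable subject only to the $|V_i\cap V_j|\le 1$ constraint and the generational compatibility, and that each such choice corresponds bijectively to a way of distributing the $n+1$ labels of $[n+1]$ among the trees with $k-1$ identifications. The factor $\frac{n!}{\prod_i k_i!\,(i!)^{k_i}}$ should emerge as the multinomial count of assigning the label multiset to the trees: there are $k_i$ trees of size $i+1$, the $(i!)^{k_i}$ accounts for internal relabelings of each size-$(i+1)$ tree that the rooted Penrose structure does (or does not) distinguish, and $\prod_i k_i!$ removes the ordering among trees of equal size. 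The extra $k!$ then records the ordered sequence $(\tau_1,\dots,\tau_k)$ versus the unordered collection, or equivalently the number of ways to linearly order the attachment/merging process consistently with the rooted merging-tree structure. I would verify the bookkeeping on a small case (e.g. $k=2$) to make sure the $k!$ and the $\prod_i k_i!$ are not double-counting.

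The main obstacle I anticipate is precisely disentangling these symmetry factors: showing that the number of label assignments producing a Penrose-faithful merging of the \emph{ordered} tuple $(\tau_1,\dots,\tau_k)$ is exactly $k!\,\frac{n!}{\prod_i k_i!\,(i!)^{k_i}}$, rather than some variant differing by automorphism corrections of individual $\tau_i$ or of the merging graph. The resolution should hinge on the rigidity of the Penrose scheme: because generations are measured from the global root $1$, the Penrose structure on $\tau$ pins down a canonical rooting and hence kills most automorphisms, so that the count is ``as free as possible'' — this is exactly the feature that will later make the generating-function identity \eqref{eq:exactpenrose} collapse to the clean form $(1-T_{\mathrm{Pen},1,\tilde\omega}(z))^n$. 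So I would prove the lemma by (1) fixing the root component, (2) showing the merging graph is an arbitrary rooted tree on $[k]$ compatible with sizes, (3) counting label placements as a multinomial, and (4) reconciling with the ordered-tuple convention to produce the stated $k!$.
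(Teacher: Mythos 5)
There is a genuine gap, and it sits exactly where you defer the work. Your structural picture --- that the merging graph can be an arbitrary rooted tree on $[k]$ and that the attachment data is ``freely choosable subject only to $|V_i\cap V_j|\le 1$ and generational compatibility'' --- is not what Penrose-faithfulness permits. Faithfulness forbids any edge of $R_{\mathrm{Pen}}(\tau)\setminus E(\tau)$ joining two different components, and since rule P1 adds \emph{every} same-generation edge, no two components may contribute non-attachment vertices at a common generation; combined with rule P2 this forces each tree to be glued at essentially one canonical vertex of the part already built (a vertex at greatest distance from the root, with a label tie-break), so the merging graph is in fact a path and there is no freedom in the attachment points at all. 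This rigidity is the real content of the lemma: it is why $N(\tau_1,\cdots,\tau_k)$ depends only on the sizes $(k_i)$ and not on the shapes of the $\tau_i$. Under your ``freely choosable'' picture the number of admissible attachment vertices would vary with the shapes of the individual trees, and you could not arrive at the shape-independent value $k!\,n!/\prod_i k_i!(i!)^{k_i}$; so steps (2)--(3) of your plan fail as stated, and the ``rigidity'' you invoke at the end is precisely the unproved step.

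For comparison, the paper's proof is a direct construction embodying that rigidity: first linearly order the $k$ factors (this is the $k!$); assign the label $1$ to the root of the first tree; distribute the remaining $n$ labels among the trees, a tree on $i+1$ vertices receiving a \emph{set} of $i$ labels placed in their natural order on the vertices originally labelled $2,\dots,i+1$ (this yields the factor $n!/\prod_i k_i!(i!)^{k_i}$); and identify the root of each subsequent tree with the canonical vertex of the preceding tree (greatest distance from $1$, smallest label among ties). Your proposal names the right numerical factors but attributes them to the wrong mechanisms --- e.g.\ $(i!)^{k_i}$ as ``internal relabelings the Penrose structure does (or does not) distinguish'' rather than as the set-versus-sequence count for the forced order-preserving placement of labels --- and the bookkeeping you flag as the ``main obstacle'' (separating the $k!$ from the $\prod_i k_i!$ without double counting) is exactly what is never carried out. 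Without the explicit canonical-attachment argument and the resulting chain structure, the stated count is not established.
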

\begin{proof}

In order to assign labels to the individual trees in a monomial to obtain a Penrose-faithful merging, we first order the individual factors. This contributes $k!$ to the $N(\tau_1, \cdots, \tau_k)$. Once the trees are ordered we assign new labels to these trees. The trees all come with their original labels from the expression for connected graphs. We respect the ordering of the labels. We assign the label $1$ to the first tree, this is given to the vertex originally labelled $1$. We then assign $i$ labels to each tree with $i+1$ vertices. This is done in $\frac{n!}{\prod_i k_i!(i!)^{k_i}}$ ways. We assign these labels in their natural order to the vertices originally labelled in $[2,i+1]$. The vertex $1$ achieves a label from the preceding tree. From the preceding tree, we choose the vertex that is firstly the greatest distance from $1$ in the orignal tree and if there are more satisfying this property, we choose the smallest.

\end{proof}

\begin{figure}
$\begin{array}{c|c}
\begin{tikzpicture}
\tikzstyle{vertex}=[circle,draw=black,minimum size=17pt,inner sep=0pt]
	\node[vertex]	(a)	 at (0,0) {3};
	\node[vertex]	(b) at (1,2) {2};
	\node[vertex]	(c) at (1,4) {1};
	\node[vertex]	(d) at (2,0) {4};
	\node[vertex]	(e) at (3.5,0) {2};
	\node[vertex]	(f) at (4.5,2) {1};
	\node[vertex]	(g) at (5.5,0) {3};
	\node[vertex]	(h) at (7,0) {2};
	\node[vertex]	(i)	at (7,2) {1};
	
    \node[above] at (c.north) {1};
    \node[left] at (b.west) {2};
    \node[below] at (a.south) {3};
    \node[below] at (d.south) {4};
    \node[above] at (f.north) {4};
    \node[below] at (e.south) {5};
    \node[below] at (g.south) {6};
    \node[above] at (i.north) {6};
    \node[below] at (h.south) {7};	
	
	\draw (c)--(b);
	\draw (a)--(b);
	\draw (b)--(d);
	\draw (f)--(e);
	\draw (f)--(g);
	\draw (h)--(i);
	
	\draw[dashed] (a)--(d);
	\draw[dashed] (e)--(g);
\end{tikzpicture}
&
\begin{tikzpicture}
\tikzstyle{vertex}=[circle,draw=black,minimum size=17pt,inner sep=0pt]
	\node[vertex]	(a)	 at (1,4) {1};
	\node[vertex]	(b) at (1,3) {2};
	\node[vertex]	(c) at (0,2) {3};
	\node[vertex]	(d) at (2,2) {4};
	\node[vertex]	(e) at (1,1) {5};
	\node[vertex]	(f) at (3,1) {6};
	\node[vertex]	(g) at (3,0) {7};
	
	\draw (a)--(b);
	\draw (b)--(c);
	\draw (b)--(d);
	\draw (d)--(e);
	\draw (d)--(f);
	\draw (f)--(g);
	
	\draw[dashed] (c)--(d);
	\draw[dashed] (e)--(f);
\end{tikzpicture}
\end{array}$
\caption{An example of a Penrose-faithful merging. The labels in the vertices on the left-hand side are the original labels, and the labels by the vertices are the newly assigned labels. On the right-hand side we have the merging of these trees.}
\end{figure}

 We realise that for an $m$-splittable tree we have ${m-1 \choose k-1}$ ways of producing this tree from a product of $k$ terms. We have $m-1$ points at which to split the tree and we split at $k-1$ of these. The combinatorial factor in \eqref{eq:virial-new} is therefore:

\be C(n+1,m)=\sum\limits_{k=1}^m {-n \choose k} {m-1 \choose k-1}.\label{eq:penrosec} \ee

\begin{proposition}
We have the combinatorial identity:
\be 
\sum\limits_{k=1}^m {-n \choose k} {m-1 \choose k-1} = (-1)^m {n \choose m}. 
\ee
\end{proposition}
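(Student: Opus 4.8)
The identity to prove is $\sum_{k=1}^m \binom{-n}{k}\binom{m-1}{k-1} = (-1)^m \binom{n}{m}$, valid for positive integers $n,m$ with $m \le n$ (though it should hold as a polynomial identity in $n$). My plan is to reduce the left-hand side to a standard Vandermonde-type convolution. First I would use the upper negation identity $\binom{-n}{k} = (-1)^k\binom{n+k-1}{k}$ to rewrite the summand, so the left side becomes $\sum_{k\ge 1}(-1)^k\binom{n+k-1}{k}\binom{m-1}{k-1}$. Shifting the summation index by setting $j = k-1$ turns this into $-\sum_{j\ge 0}(-1)^j\binom{n+j}{j+1}\binom{m-1}{j}$.

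The cleanest route from here is via generating functions or a direct application of the Vandermonde convolution $\sum_j \binom{a}{j}\binom{b}{p-j} = \binom{a+b}{p}$ in one of its negated forms. Concretely, I would recognize $\binom{n+j}{j+1} = \binom{n+j}{n-1}$ and write the sum as a coefficient extraction: $\binom{n+j}{n-1} = [x^{n-1}](1+x)^{n+j}$, so the sum becomes $[x^{n-1}](1+x)^n \sum_j \binom{m-1}{j}(-(1+x))^j = [x^{n-1}](1+x)^n(1-(1+x))^{m-1} = [x^{n-1}](1+x)^n(-x)^{m-1}$. This equals $(-1)^{m-1}[x^{n-m}](1+x)^n = (-1)^{m-1}\binom{n}{n-m} = (-1)^{m-1}\binom{n}{m}$. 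Combining with the overall minus sign from the index shift gives $(-1)^m\binom{n}{m}$, as required. An alternative I could fall back on is to prove it by induction on $m$, using the Pascal-type recursion for $\binom{m-1}{k-1}$ and the identity for the case $m=1$ (where the left side is simply $\binom{-n}{1} = -n = -\binom{n}{1}$).

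The main obstacle is bookkeeping rather than conceptual: one must handle the index shift and the binomial-coefficient manipulations (upper negation, the symmetry $\binom{n+j}{j+1}=\binom{n+j}{n-1}$, and the coefficient-extraction step) carefully, and check that the formal manipulation is legitimate — in particular that only finitely many terms are nonzero, which holds since $\binom{m-1}{j}$ vanishes for $j \ge m$. I would also note at the end that both sides are polynomials in $n$ of degree $m$, so once the identity is verified for all positive integers $n$ it holds for the extended binomial coefficients $\binom{-n}{k}$ appearing in \eqref{eq:penrosec} for arbitrary real arguments, which is all the subsequent analysis needs.
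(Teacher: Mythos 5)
Your proof is correct and takes essentially the same route as the paper's: both are short coefficient-extraction (generating function) arguments for this Vandermonde-type convolution. The paper reads the sum directly as the coefficient $[z^{m-1}]$ of $(1+z)^{m-1}\cdot\frac{1}{z}\left((1+z)^{-n}-1\right)$, whereas you first apply upper negation and extract $[x^{n-1}]$ after summing over $j$ with the binomial theorem --- a cosmetic difference only.
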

\begin{proof}
If we let $s=k-1$ in \eqref{eq:penrosec}, then we can recast the sum as:

\be \sum\limits_{s=0}^{m-1}{m-1 \choose m-1-s} {-n \choose s+1}. \ee

We realise this is the $z^{m-1}$ term of the generating function formed by multiplying $(1+z)^{m-1}$ and $\frac{1}{z}\left(\frac{1}{(1+z)^n}-1\right)$, which means we need to calculate:

\be  [z^{m}]\left(\frac{1}{(1+z)^{n-m+1}}-(1+z)^{m-1}\right)={-(n-m+1) \choose m}=(-1)^m{n \choose m} .\ee
\end{proof}

Hence, $C(n+1,m)=(-1)^m{n \choose m}$ and we have the expression:
\be \beta_{n+1} = \sum\limits_{m=1}^{n} (-1)^m {n \choose m} \sum\limits_{\tau \in \mathcal{T}_{\mathrm{Pen},m}[n+1]}\tilde{\omega}(\tau). \ee

Let $T_{\mathrm{Pen},1}(z)$ be the shifted exponential generating function for Penrose graphs that cannot be split.  It is defined by:

\be T_{\mathrm{Pen},1}(z):=\sum\limits_{n=2}^{\infty}|\mathcal{T}_{\mathrm{Pen},1}[n]|\frac{z^{n-1}}{(n-1)!}. \label{eq:T1def} \ee

The exponential generating function is shifted to make it easier to find the numbers $|\mathcal{T}_{\mathrm{Pen},m}[n]|$ from the above series.

To form an $m$-splittable tree, we need to attach $m$ non splittable trees together in the way described above. We have that
\be |\mathcal{T}_{\mathrm{Pen},m}[n]|=\sum\limits_{(k_1, \cdots, k_n)}\frac{(n-1)!m!}{\prod_{i=1}^nk_i!(i!)^{k_i}}\prod\limits_{i=1}^n|\mathcal{T}_{\mathrm{Pen},1}[i+1]|^{k_i}, \ee
where the sum above is over sequences $(k_1, \cdots, k_n)$ satisfying $\sum_i k_i=m$ and $\sum_i ik_i=n$. We can thus see this as the coefficient of $z^{n-1}$ in the product $T_{\mathrm{Pen},1}(z)^m$ and hence we obtain:
\be |\mathcal{T}_{\mathrm{Pen},m}[n]|=(n-1)![z^{n-1}](T_{\mathrm{Pen},1}(z))^m. \label{eq:T1Tm} \ee

We note that this identity is true for the weighted generating function $T_{\mathrm{Pen},1,\tilde{\omega}}(z)$ as well due to the factorisation of the weights and hence
\be
T_{\mathrm{Pen},m,\tilde{\omega}}(z)=(T_{\mathrm{Pen},1,\tilde{\omega}}(z))^m.
\ee

\subsection{The Generating Function $T_{\mathrm{Pen},1}(z)$}

The main result of this section is the number of non splittable trees with respect to the Penrose partition.

\begin{theorem}
\label{thm:noofnonseparable}
The number of non-splittable trees for the Penrose partition is:
\be
|\mathcal{T}_{\mathrm{Pen},1}[n+1]| = (n-1)^{n-1}, 
\ee
where we understand $0^0$ as $1$ to cover the case $n=1$.
\end{theorem}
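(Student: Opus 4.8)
The plan is to determine the generating function $T_{\mathrm{Pen},1}(z)$ in closed form and then extract its coefficients, rather than to count non-splittable Penrose trees by a direct combinatorial description. The one structural input I would use is that the classes $\mathcal{T}_{\mathrm{Pen},\ell}[n]$, $\ell\geq1$, partition $\mathcal{T}[n]$ for every $n\geq2$: each tree on $n$ labels is $\ell$-splittable for exactly one $\ell$, because the set of numbers of parts into which $\tau$ admits a splitting is downward closed. Indeed, given a splitting of $\tau$ into $\ell+1$ parts, merging two parts that are adjacent in the merging graph $\mathcal{M}$ at their common vertex produces a splitting into $\ell$ parts; so this set is an interval $\{1,\dots,\ell_{\max}(\tau)\}$, and $\ell_{\max}(\tau)$ is the unique $\ell$ with $\tau\in\mathcal{T}_{\mathrm{Pen},\ell}[n]$. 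Summing \eqref{eq:T1Tm} over $m$ and using Cayley's formula $|\mathcal{T}[n]|=n^{n-2}$ then gives
\be
\frac{T_{\mathrm{Pen},1}(z)}{1-T_{\mathrm{Pen},1}(z)}=\sum_{m\geq1}\bigl(T_{\mathrm{Pen},1}(z)\bigr)^{m}=\sum_{n\geq2}\frac{n^{n-2}}{(n-1)!}\,z^{n-1},
\ee
the interchange of summations being legitimate because $T_{\mathrm{Pen},1}(z)=O(z)$.

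Next I would recognise the right-hand side. Let $T^{\bullet}(z)=\sum_{n\geq1}\frac{n^{n-1}}{n!}z^{n}$ be the exponential generating function of rooted labelled trees, which, as recalled in Section \ref{sec:groeneveldpenrose}, is the compositional inverse of $s\mapsto se^{-s}$ and hence satisfies $T^{\bullet}(z)=z\,e^{T^{\bullet}(z)}$. Since $n^{n-2}/(n-1)!=n^{n-1}/n!$, the right-hand side above equals $\frac1z\bigl(T^{\bullet}(z)-z\bigr)=e^{T^{\bullet}(z)}-1$. Solving $\frac{T_{\mathrm{Pen},1}}{1-T_{\mathrm{Pen},1}}=e^{T^{\bullet}}-1$ for $T_{\mathrm{Pen},1}$ yields the closed form
\be
T_{\mathrm{Pen},1}(z)=1-e^{-T^{\bullet}(z)},
\ee
which also matches the identity $T_{1}(se^{-s})=1-e^{-s}$ used in Section \ref{sec:groeneveldpenrose} upon setting $z=se^{-s}$, so that $T^{\bullet}(z)=s$; this serves as a consistency check.

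The last step is to expand $e^{-T^{\bullet}(z)}$, which I would do by Lagrange inversion applied to $T^{\bullet}=z\,\phi(T^{\bullet})$ with $\phi(w)=e^{w}$: for $n\geq1$,
\be
[z^{n}]\,e^{-T^{\bullet}(z)}=\frac1n[w^{n-1}]\bigl(-e^{-w}\bigr)e^{nw}=-\frac1n[w^{n-1}]\,e^{(n-1)w}=-\frac{(n-1)^{n-1}}{n!},
\ee
the case $n=1$ giving $-1$, which is the convention $0^{0}=1$. Hence $[z^{n}]T_{\mathrm{Pen},1}(z)=(n-1)^{n-1}/n!$, and comparing with the definition \eqref{eq:T1def} of $T_{\mathrm{Pen},1}$, which gives $[z^{n}]T_{\mathrm{Pen},1}(z)=|\mathcal{T}_{\mathrm{Pen},1}[n+1]|/n!$, we conclude $|\mathcal{T}_{\mathrm{Pen},1}[n+1]|=(n-1)^{n-1}$.

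The genuinely delicate point is the first one: one must check that splittability is monotone in the number of parts, so that the $\mathcal{T}_{\mathrm{Pen},\ell}[n]$ really do form a partition of $\mathcal{T}[n]$; this amounts to the statement that merging two faithfully glued Penrose trees is again Penrose-faithful, which is essentially built into the framework of Section \ref{sec:partitionschemes}. Everything after that is formal power-series manipulation together with the two classical facts — Cayley's formula and the rooted-tree/Lambert-$W$ functional equation — both of which are already cited in the paper. A purely bijective alternative would rest on a direct description of which Penrose trees are non-splittable, but extracting the count $(n-1)^{n-1}$ that way appears less clean than the generating-function route above.
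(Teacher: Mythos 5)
Your argument is correct, and it rests on the same backbone as the paper's proof: the fact that every tree in $\mathcal{T}[n]$ is $m$-splittable for exactly one $m$, Cayley's formula, and the identity \eqref{eq:T1Tm}, which together give $T_{\mathrm{Pen},1}(z)/(1-T_{\mathrm{Pen},1}(z))=T'(z)-1$. Where you genuinely diverge is in the coefficient extraction. The paper rewrites this relation as \eqref{eq:T1inT}, differentiates, and uses the functional equation for rooted trees together with Otter's dissymmetry theorem to recognise $T_{\mathrm{Pen},1}'(z)=1/(1-T^{\bullet}(z))=T^{\bullet\bullet}(z)+1$, then integrates term by term using the known coefficients $n^{n}/n!$ of doubly rooted trees. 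You instead solve for the closed form $T_{\mathrm{Pen},1}(z)=1-e^{-T^{\bullet}(z)}$ (equivalent to \eqref{eq:T1inT}, since $T'(z)=e^{T^{\bullet}(z)}$) and extract coefficients by Lagrange inversion with $\phi(w)=e^{w}$; your computation $[z^{n}]\,e^{-T^{\bullet}(z)}=-(n-1)^{n-1}/n!$ is correct, including the $n=1$ case under the convention $0^{0}=1$. Your route is a little shorter, dispenses with doubly rooted trees and the dissymmetry theorem, and has the pleasant side effect that the closed form immediately explains the identities \eqref{eq:relation1} and \eqref{eq:relation2} used in Section \ref{sec:groeneveldpenrose}, whereas the paper's route stays within elementary tree identities and avoids a second appeal to Lagrange inversion. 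On the one delicate shared input --- that the set of achievable part numbers of splittings of a fixed tree is an interval, so that the classes $\mathcal{T}_{\mathrm{Pen},m}[n]$ really partition $\mathcal{T}[n]$ --- you flag it explicitly and sketch the merging-of-adjacent-parts argument; the paper asserts the same fact with comparable brevity, so the two proofs are at the same level of rigour on that point.
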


In order to prove this, we first require modifications of the following two theorems.

\begin{theorem}[The Dissymmetry Theorem for Trees]
Let $T^{\bullet}(z)$ denote the exponential generating function of rooted trees (formed by taking the Euler derivative), $T^{-}(z)$ the collection of edge-rooted trees and $T^{\bullet - \circ}(z)$ the exponential generating function of oriented edge rooted trees. We have the identity due to Otter \cite{O48}
\be T^{\bullet}(z)+T^{-}(z)=T(z)+T^{\bullet - \circ}(z). \ee

\end{theorem}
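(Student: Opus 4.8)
The plan is to prove the identity by a canonical, tree-by-tree bijection, following the classical combinatorial argument of Otter (the species-theoretic \emph{dissymmetry} principle). Since all four quantities are exponential generating functions counting labelled structures built on the same underlying trees, it suffices to exhibit, for every $n$ and every labelled tree $\tau$ on $[n]$, a bijection between the objects contributing to the left-hand side at $\tau$ — its vertices (contributing to $T^{\bullet}$) together with its edges (contributing to $T^{-}$) — and the objects contributing to the right-hand side at $\tau$ — a single ``bare-tree'' token (contributing to $T$) together with its oriented edges (contributing to $T^{\bullet-\circ}$). A fixed tree on $n$ vertices carries $n+(n-1)=2n-1$ objects on the left and $1+2(n-1)=2n-1$ on the right, so the cardinalities already agree; the content is to make the matching canonical and hence compatible with relabelling.

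The organising device is the \emph{centre} of a tree, in the sense introduced in the definition of the centre of a tree. The first step is to recall the classical fact (Jordan) that, although that definition produces a set, for an arbitrary finite tree this set is either a single vertex or a pair of adjacent vertices; in the latter case the two adjacent vertices determine a unique \emph{central edge}. Thus every tree has \emph{either} a central vertex \emph{or} a central edge, but never both. The eccentricity argument sketched earlier — that the eccentricity of a vertex is attained at a leaf, and that stepping off a non-central vertex towards the interior strictly decreases eccentricity — is what forces this dichotomy.

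The bijection is then defined in two parts. First, the centre maps to the bare-tree token: if $\tau$ has a central vertex $v_0$ we send $v_0 \in T^{\bullet}$ to $\tau \in T$, and if $\tau$ has a central edge $e_0$ we send $e_0 \in T^{-}$ to $\tau \in T$. Because the two situations are mutually exclusive, the token $\tau \in T$ is hit exactly once. Every remaining object is sent to an oriented edge, using that deleting any edge separates $\tau$ into two parts, exactly one of which contains the centre: this distinguishes a \emph{near} and a \emph{far} endpoint, hence a toward-centre and an away-from-centre orientation of that edge. I would send each non-central vertex $v$ to the toward-centre orientation of the first edge on the path from $v$ to the centre, and each non-central edge to its away-from-centre orientation. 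In the central-vertex case every edge has a distinct far endpoint, so the non-central vertices sweep out exactly the toward-centre orientations and the edges sweep out the away-from-centre orientations. In the central-edge case one adopts the convention that the two endpoints $a_0,b_0$ of $e_0$ are mapped to the two orientations $a_0\to b_0$ and $b_0\to a_0$ of $e_0$ itself, the remaining vertices cover the toward-centre orientations of the non-central edges, and those $n-2$ edges cover the away-from-centre ones.

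The final step is to check that in each case the assignment is a genuine bijection onto the set of all $2(n-1)$ oriented edges, and that the whole construction is invariant under relabelling, being defined purely through the intrinsic centre and the tree metric. Summing the per-tree bijections over all labelled trees on $[n]$ then gives $|T^{\bullet}[n]|+|T^{-}[n]|=|T[n]|+|T^{\bullet-\circ}[n]|$, and multiplying by $z^n/n!$ and summing over $n$ yields the stated identity. I expect the main obstacle to be precisely the orientation bookkeeping around the centre: one must confirm that the non-central vertices match bijectively with the toward-centre orientations and the non-central edges with the away-from-centre orientations, with particular care in the central-edge case, where both endpoints of the central edge are non-central vertices and must absorb the two orientations of the central edge rather than those of any other edge.
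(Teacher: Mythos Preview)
Your argument is correct and is the classical Otter bijection built on Jordan's centre theorem. The orientation bookkeeping you flag as the delicate point is in fact already handled by your description: in the central-vertex case the map ``non-central vertex $\mapsto$ toward-centre orientation of its incident edge on the path to $v_0$'' is a bijection onto the $n-1$ toward-centre arcs, and ``edge $\mapsto$ away-from-centre orientation'' covers the other $n-1$; in the central-edge case the same rule, interpreted with the centre sitting on $e_0$, automatically sends $a_0$ and $b_0$ to the two orientations of $e_0$, so no separate convention is really needed.

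The paper, however, does not prove this theorem at all: it is quoted as Otter's identity with a reference, and the sentences that follow are not a proof but a reformulation. The paper simply observes the species identities $T^{-}(z)=\tfrac12\bigl(T^{\bullet}(z)\bigr)^2$ and $T^{\bullet-\circ}(z)=\bigl(T^{\bullet}(z)\bigr)^2$ (an edge-rooted tree is an unordered pair of rooted trees; an oriented-edge-rooted tree is an ordered pair) and substitutes them into Otter's identity to obtain the ``modified'' form $T^{\bullet}(z)-\tfrac12\bigl(T^{\bullet}(z)\bigr)^2=T(z)$, which is what is actually used later. So your bijective proof supplies strictly more than the paper does.

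One minor observation: for the exponential-generating-function identity as stated, the explicit bijection is more than is required. Since all four species decompose tree-by-tree, the numerical count $n+(n-1)=1+2(n-1)$ for each fixed labelled tree already proves the EGF identity; the canonical bijection you construct is only needed if one wants the stronger species isomorphism (equivariance under relabelling), which the paper neither states nor uses.
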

We realise that rooting a tree at an edge is the same as splitting the vertex set into two sets either side of the edge and having a rooted tree for each half i.e. $\mathcal{T}^{-}=\mathcal{E}_2(\mathcal{T}^{\bullet})=\frac{(\mathcal{T}^{\bullet})^2}{2}$. We realise that if we have an oriented edge then we have $\mathcal{T}^{\bullet - \circ}=(\mathcal{T}^{\bullet})^2$. This provides us with the modified identity:
\be \mathcal{T}^{\bullet}(z)- \frac{1}{2}(\mathcal{T}^{\bullet}(z))^2 = \mathcal{T}(z). \label{eq:rootedtrees} \ee
\begin{corollary}
\label{cor:doublerooted}
We have the following relationship between doubly rooted and rooted trees:
\be T^{\bullet \bullet}(z)=\frac{T^{\bullet}(z)}{1-T^{\bullet}(z)}. \ee
\end{corollary}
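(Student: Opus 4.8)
The plan is to differentiate the modified dissymmetry identity \eqref{eq:rootedtrees}. Recall that the Euler derivative (\emph{pointing} a vertex) acts on exponential generating functions as the operator $z\frac{\mathrm{d}}{\mathrm{d}z}$: if a labelled structure has EGF $f(z)=\sum_n f_n z^n/n!$, then marking one of the $n$ vertices produces the EGF $\sum_n n f_n z^n/n! = z f'(z)$. Hence $T^{\bullet}(z)=zT'(z)$ and, applying the operator once more, $T^{\bullet\bullet}(z)=z\frac{\mathrm{d}}{\mathrm{d}z}T^{\bullet}(z)$. Since $T^{\bullet}(0)=0$, the series $1-T^{\bullet}(z)$ has nonzero constant term and is therefore invertible in $\mathbb{Q}[[z]]$, so the claimed identity is well posed.

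Now I would apply $z\frac{\mathrm{d}}{\mathrm{d}z}$ to both sides of \eqref{eq:rootedtrees}, that is to $T^{\bullet}(z)-\tfrac12\bigl(T^{\bullet}(z)\bigr)^2=T(z)$. The left-hand side becomes
\[
zT^{\bullet\prime}(z)-zT^{\bullet}(z)T^{\bullet\prime}(z)=T^{\bullet\bullet}(z)\bigl(1-T^{\bullet}(z)\bigr),
\]
using $T^{\bullet\bullet}(z)=zT^{\bullet\prime}(z)$, while the right-hand side becomes $zT'(z)=T^{\bullet}(z)$ by definition of pointing applied to $T$. Equating the two and multiplying by the inverse of $1-T^{\bullet}(z)$ yields $T^{\bullet\bullet}(z)=\frac{T^{\bullet}(z)}{1-T^{\bullet}(z)}$, as required.

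There is essentially no obstacle in this argument; the only points that deserve a word of care are the dictionary ``pointing a labelled structure $\leftrightarrow$ $z\frac{\mathrm{d}}{\mathrm{d}z}$ on its EGF'' and the invertibility of $1-T^{\bullet}(z)$ as a formal power series. As a consistency check one can bypass \eqref{eq:rootedtrees} entirely and start from the Cayley functional equation $T^{\bullet}(z)=z\,e^{T^{\bullet}(z)}$: differentiating gives $T^{\bullet\prime}(z)=e^{T^{\bullet}(z)}+z\,e^{T^{\bullet}(z)}T^{\bullet\prime}(z)=\frac{T^{\bullet}(z)}{z}+T^{\bullet}(z)T^{\bullet\prime}(z)$, whence $zT^{\bullet\prime}(z)\bigl(1-T^{\bullet}(z)\bigr)=T^{\bullet}(z)$, the same relation. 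Finally, a purely bijective proof is also available: a tree on $[n]$ carrying an ordered pair of (possibly equal) marked vertices decomposes, by removing the edges of the path joining the two marks, into a nonempty ordered sequence of rooted trees whose vertex sets partition $[n]$; this correspondence translates into $T^{\bullet\bullet}(z)=\sum_{k\geq 1}\bigl(T^{\bullet}(z)\bigr)^k=\frac{T^{\bullet}(z)}{1-T^{\bullet}(z)}$ at the level of exponential generating functions.
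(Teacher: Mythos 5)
Your argument is correct and is essentially the paper's own proof: the paper likewise obtains the identity by taking the Euler derivative of the modified dissymmetry relation \eqref{eq:rootedtrees}, yielding $T^{\bullet\bullet}(z)-T^{\bullet}(z)T^{\bullet\bullet}(z)=T^{\bullet}(z)$ and hence the claim, with your explicit $z\frac{\mathrm{d}}{\mathrm{d}z}$ bookkeeping and the invertibility remark just spelling out what the paper leaves implicit. The extra consistency checks (via the functional equation $T^{\bullet}=ze^{T^{\bullet}}$ and the bijective path decomposition) are correct but not needed.
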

\begin{proof}
If we take the Euler derivative of \eqref{eq:rootedtrees}, then we obtain:

\begin{align} T^{\bullet \bullet}(z)-T^{\bullet}(z)T^{\bullet \bullet}&=T^{\bullet}(z), \notag \\
T^{\bullet \bullet}(z)&=\frac{T^{\bullet}(z)}{1-T^{\bullet}(z)}. \end{align}
\end{proof}

\begin{theorem}[Functional Equation for Trees]
Rooted tree generating functions satisfy a recursion relation
\be T^{\bullet}(z)=z \exp(T^{\bullet}(z)) .\ee
In particular for the derivative of the tree generating function, we have:
\be T'(z)= \exp(z T'(z)). \label{eq:treerecursive} \ee
\end{theorem}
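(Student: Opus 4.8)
The plan is to prove both displayed identities by the standard combinatorial decomposition of (rooted) labelled trees --- i.e.\ by the exponential formula for labelled species --- and then to pass from the statement about $T^{\bullet}(z)$ to the one about $T'(z)$ using the bookkeeping identity $T^{\bullet}(z) = zT'(z)$.

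First I would prove $T^{\bullet}(z) = z\exp(T^{\bullet}(z))$. A rooted labelled tree on a nonempty vertex set is, bijectively, a choice of root vertex together with the unordered family of maximal subtrees dangling from it, each such subtree being rooted at the neighbour of the root through which it attaches. In exponential-generating-function terms the distinguished root vertex contributes a factor $z$ and the unordered family of rooted trees contributes $\exp(T^{\bullet}(z))$ (the exponential formula), giving $T^{\bullet}(z) = z\exp(T^{\bullet}(z))$; the single-vertex tree is recovered from the empty family, i.e.\ from the constant term $1$ of $\exp(T^{\bullet}(z))$, so the coefficient of $z$ comes out correctly. Next I would record that $T^{\bullet}(z) = zT'(z)$: an $n$-vertex tree can be rooted in exactly $n$ ways, so $T^{\bullet}(z) = \sum_{n\ge 1} n\,\frac{|\mathcal{T}[n]|}{n!}z^{n} = z\frac{d}{dz}\sum_{n\ge 1}\frac{|\mathcal{T}[n]|}{n!}z^{n} = zT'(z)$. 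Substituting this into $T^{\bullet}(z) = z\exp(T^{\bullet}(z))$ gives $zT'(z) = z\exp(zT'(z))$, and cancelling the common factor $z$ in the ring of formal power series yields $T'(z) = \exp(zT'(z))$. (Equivalently one can run the species argument once more: $T'(z)$ counts trees carrying an extra ``ghost'' vertex, which decompose as a family of rooted trees attached to the ghost, so $T'(z) = \exp(T^{\bullet}(z)) = \exp(zT'(z))$.)

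The hard part here is essentially nonexistent: both identities are classical (they encode Cayley's formula and the tree function), and the only thing to be careful about is the exponential-generating-function bookkeeping --- getting the base case of the recursion right and justifying the cancellation of the factor $z$, which is legitimate since both sides lie in $z\,\mathbb{Q}[[z]]$. The real content is downstream: combined with the modified dissymmetry identity $T^{\bullet}(z) - \tfrac12 (T^{\bullet}(z))^{2} = T(z)$ from \eqref{eq:rootedtrees} and its corollary $T^{\bullet\bullet}(z) = T^{\bullet}(z)/(1 - T^{\bullet}(z))$ (Corollary~\ref{cor:doublerooted}), these recursions are exactly what is needed to solve the functional relation satisfied by $T_{\mathrm{Pen},1}(z)$ --- namely $T_{\mathrm{Pen},1}'(z) = 1/(1 - T^{\bullet}(z)) = 1 + T^{\bullet\bullet}(z)$, equivalently $T_{\mathrm{Pen},1}(z) = 1 - z/T^{\bullet}(z)$ --- and hence to extract $[z^{n}]T_{\mathrm{Pen},1}(z) = (n-1)^{n-1}/n!$, which is Theorem~\ref{thm:noofnonseparable}.
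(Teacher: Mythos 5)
Your proof is correct and follows essentially the same route as the paper: the standard decomposition of a rooted tree into its root vertex together with the unordered family of rooted subtrees at the root's neighbours, giving $T^{\bullet}(z)=z\exp(T^{\bullet}(z))$, and then the relation $T^{\bullet}(z)=zT'(z)$ to obtain $T'(z)=\exp(zT'(z))$. The only difference is cosmetic: you spell out the identity $T^{\bullet}(z)=zT'(z)$ and the cancellation of the factor $z$, whereas the paper treats $T^{\bullet}$ as the Euler derivative of $T$ by definition and leaves that step implicit.
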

This identity is found in \cite{L04}.
\begin{proof} Consider a rooted tree. The root may have an arbitrary number of neighbours. One can view the rest of the tree outside of the root as a collection of trees rooted at the corresponding neighbour of the initial root. Hence, we have the first $z$ for the root of the tree and the exponential function since the order of the neighbours doesn't matter so we need to divide by $k!$ if we have $k$ neighbours and for each neighbour we have a rooted tree.
\end{proof}
\begin{corollary}
\label{cor:secondderivative}
We have that the second derivative of the tree generating function may be expressed in terms of the first:
\be
T''(z)=\frac{(T'(z))^2}{1-zT'(z)}.
\ee
\end{corollary}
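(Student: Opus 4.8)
The plan is to obtain the identity by differentiating the functional equation \eqref{eq:treerecursive}, namely $T'(z) = \exp\left(zT'(z)\right)$, and then solving the resulting linear relation for $T''(z)$. Writing $w = w(z) := T'(z)$ for brevity, the equation reads $w = e^{zw}$. Applying $\tfrac{d}{dz}$ to both sides and using the chain rule on the right gives $w' = e^{zw}\cdot\left(w + zw'\right)$. The key simplification is to substitute $e^{zw} = w$ back in (this is the whole content of using \eqref{eq:treerecursive}), which turns the right-hand side into $w\left(w + zw'\right) = w^2 + zww'$.

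From $w' = w^2 + zww'$ one collects the $w'$ terms: $w'\left(1 - zw\right) = w^2$, hence $w' = \dfrac{w^2}{1 - zw}$. Translating back, $T''(z) = \dfrac{\left(T'(z)\right)^2}{1 - zT'(z)}$, which is exactly the claimed identity. The only thing to check is that $1 - zT'(z)$ is not identically zero as a formal power series so that the division makes sense; since $T'(0) = 1$ and $zT'(z)$ has zero constant term, $1 - zT'(z)$ has constant term $1$ and is therefore invertible in $\mathbb{C}[[z]]$, so the manipulation is legitimate at the level of formal power series.

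An alternative route, if one prefers to stay within the rooted-tree formalism already set up, is to use Corollary \ref{cor:doublerooted}. Since rooting corresponds to the Euler derivative $z\tfrac{d}{dz}$, one has $T^{\bullet}(z) = zT'(z)$ and $T^{\bullet\bullet}(z) = z\tfrac{d}{dz}\left(zT'(z)\right) = zT'(z) + z^2 T''(z)$. Substituting these into $T^{\bullet\bullet}(z) = \dfrac{T^{\bullet}(z)}{1 - T^{\bullet}(z)}$ and solving for $T''(z)$ again yields $T''(z) = \dfrac{\left(T'(z)\right)^2}{1 - zT'(z)}$. I would present the first (direct differentiation) argument as the main proof, since it is shortest and self-contained given \eqref{eq:treerecursive}; there is no real obstacle here beyond the routine algebra of isolating $w'$, so I would keep the write-up to a few lines.
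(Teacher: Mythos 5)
Your main argument—differentiating the functional equation \eqref{eq:treerecursive}, substituting $\exp(zT'(z))=T'(z)$ back in, and solving the resulting linear relation for $T''(z)$—is correct and is precisely the proof given in the paper. The remarks on invertibility of $1-zT'(z)$ and the alternative route via Corollary \ref{cor:doublerooted} are fine additions but not needed.
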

\begin{proof}
If we differentiate \eqref{eq:treerecursive}, then we obtain:
\begin{align} T''(z) &=(T'(z)+zT''(z))\exp(zT'(z)) \notag \\
&=(T'(z))^2+zT'(z)T''(z), \notag \\
T''(z)(1-zT'(z)) &= (T'(z))^2, \notag \\
T''(z) &= \frac{(T'(z))^2}{1-zT'(z)}. \end{align}
\end{proof}

\begin{proof}[Proof of Theorem \ref{thm:noofnonseparable}]
We know that the sum of all tree graphs may be written in two ways using Cayley's formula:

\be \sum\limits_{m=1}^{n-1} |\mathcal{T}_{\mathrm{Pen},m}[n]|=n^{n-2}. \ee

We emphasise that any tree has a well defined $m$ for which is it $m$-splittable.

We can express the left hand side easily in terms of coefficients of generating functions from \eqref{eq:T1def} and \eqref{eq:T1Tm}. For the right hand side we want the number of trees on $n$ vertices to appear as the coefficient of $z^{n-1}$. This means we need to differentiate the tree generating function. Furthermore, the constant term of the left hand side is $0$, but $T'(0)=1$, so we must also take $1$ and obtain: 

\be (n-1)![z^{n-1}]\sum\limits_{m=1}^{n-1} (T_{\mathrm{Pen},1}(z))^m = (n-1)![z^{n-1}](T'(z)-1) \ee
which we have for every $n$. We also note that we may make the sum on the left hand side to infinity without affecting the outcome, since we will only have terms of greater powers of $z$ from adding these factors. Hence we have the identity:

\be \frac{T_{\mathrm{Pen},1}(z)}{1-T_{\mathrm{Pen},1}(z)}=T'(z)-1 .\ee

This can be manipulated into the form:

\be T_{\mathrm{Pen},1}(z)=1-\frac{1}{T'(z)}. \label{eq:T1inT} \ee

If we differentiate \eqref{eq:T1inT}, then we have the expression:

\be T_{\mathrm{Pen},1}'(z)=\frac{T''(z)}{T'(z)^2}. \ee

We use corollary \ref{cor:secondderivative} to obtain:
\be T_{\mathrm{Pen},1}'(z)=\frac{1}{1-T^{\bullet}(z)} \label{eq:penasrootedtree}, \ee

and using corollary \ref{cor:doublerooted}, we have:

\be T_{\mathrm{Pen},1}'(z)=T^{\bullet \bullet}(z)+1. \ee

Hence, we integrate both sides using the fact that $T_{\mathrm{Pen},1}(0)=0$ to obtain:
\be T_{\mathrm{Pen},1}(z)=\sum\limits_{n=1}^{\infty} n^n \frac{z^{n+1}}{(n+1)!} + z. \ee
\end{proof}

\subsection{Evaluating an Upper Bound}

For stable potentials, we have the bound, from section \ref{subsec:boundpartition}, 

\begin{align} \frac{|\beta_{n+1}|}{(n+1)!} &\leq \frac{C(\beta)^{n}}{n} u^{n-1} [z^n]\sum\limits_{m=1}^n {n \choose m} (T_{\mathrm{Pen},1}(z))^m \notag \\
&= \frac{C(\beta)^{n}}{n}[z^n]\left(1+\frac{1}{u}T_{\mathrm{Pen},1}(uz)\right)^n. \end{align}
We may represent this coefficient through the contour integral representation:
\be [z^n]\left(1+\frac{1}{u}T_{\mathrm{Pen},1}(uz)\right)^n = \frac{1}{2 \pi i}\oint_C \frac{\mathrm{d}\xi}{\xi^{n+1}}\left(1+\frac{1}{u}T_{\mathrm{Pen},1}(u\xi)\right)^n \ee
where $C$ is a contour around $0$ within distance $\frac{1}{eu}$ from the origin, which is the radius of convergence of $T_{\mathrm{Pen},1}(uz)$. The right hand side can be bounded above by taking a contour with fixed radius $R$ and since all coefficients are positive in $T_{\mathrm{Pen},1}(uz)$ the maximum value of the modulus along this circle is the function evaluated at $R$. We thus have:

\be
[z^n]\left(1+\frac{1}{u}T_{\mathrm{Pen},1}(uz)\right)^n \leq \left(\frac{1+\frac{1}{u}T_{\mathrm{Pen},1}(uR)}{R}\right)^n
\ee
for positive $R < \frac{1}{eu}$.

In order to find an optimal $R$, we differentiate and find the smallest positive root of the derivative equal to zero. That is we need to solve:
\be 
\frac{(1+\frac{1}{u}T_{\mathrm{Pen},1}(uR))^{n-1}n(RT_{\mathrm{Pen},1}'(uR)-1-\frac{1}{u}T_{\mathrm{Pen},1}(uR))}{R^{n+1}}=0.
\ee
Since $\frac{1}{u}T_{\mathrm{Pen},1}(uR)$ cannot equal $-1$, it follows that the equation is solved by:
\begin{align*}
 RT_{\mathrm{Pen},1}'(uR)-1-\frac{1}{u}T_{\mathrm{Pen},1}(uR)&=0, \\
uRT_{\mathrm{Pen},1}'(uR)-T_{\mathrm{Pen},1}(uR) &=u. \end{align*}
and hence we have the smallest positive solution to this equation dependent on $u$ and call this $c(u)$. This provides the upper bound for virial coefficients:
\be 
\frac{|\beta_{n+1}|}{(n+1)!} \leq \frac{C(\beta)^n}{(n+1)!}\left(\frac{1+\frac{1}{u}T_{\mathrm{Pen},1}(uc(u))}{c(u)} \right)^n .
\ee
This leads to the lower bound on the radius of convergence of the virial expansion:
\be
\mathcal{R}_{\mathrm{Vir}} \geq \frac{c(u)}{1+\frac{1}{u}T_{\mathrm{Pen},1}(uc(u))}C(\beta)^{-1}. \ee

\section{Conclusions and Future Work}
\label{sec:conclusion}
Combining the Penrose tree partition scheme with the algebraic relationship between the virial and the cluster coefficients, we have been able to rigorously achieve Groeneveld's lower bounds on the radius of convergence of the virial expansion for a classical gas with purely repulsive interactions. This may also be generalised to stable potentials through using a relabelling of the trees. This has also provided representation of the virial coefficients in terms of weighted trees.  

Graph tree equalities, through the fundamental theorem of calculus of Brydges and Federbush \cite{BF78}, are an alternative way of understanding rewriting connected graphs in terms of trees. These have been generalised to matroids by Faris \cite{F12} and given a symmetric form by Abdesselam and Rivasseau \cite{AR94}. These identities have not yet been made amenable to finding a product structure. A further possibility to deal with stable potentials is to use the ideas of Rivasseau \cite{RT14} where a Kruskal algorithm gives an optimal way of writing a tree partition.

Future work is to understand what one can achieve from other partition schemes and whether there is any sense of an optimal partition scheme from which one can obtain bounds. It is anticipated that future papers will understand the application of these to hardcore potentials in the discrete and continuous case, where the new tree weights should assist in making more accurate estimates. In addition, this new expression for virial coefficients in terms of trees should provide an easier method of making calculations than the method of Ree-Hoover bounds \cite{RH64}.

\medskip\noindent
{\bf Acknowledgements.}
S. R. would like to acknowledge financial support from NWO grant 613.001.015. S. J. T. would like to acknowledge financial support from EPSRC grant EP/G056390/1, the London Mathematical Society Postdoctoral Mobility Grant, the SFB TR12 and the EPSRC grant EP/L025159/1. The authors would like to thank Roberto Fern\'{a}ndez for his helpful comments in improving this article. Furthermore, the authors would like to thank the hospitality at Ruhr-Universit\"{a}t Bochum, Universiteit Leiden, Universiteit Utrecht and University of Warwick. S. J. T. would also like to thank David Brydges for other helpful comments on the article.

\end{document}